\newtheorem{lemma}{Lemma}
\newtheorem{theorem}{Theorem}
\def\tcm{T.C.M. Group, Cavendish Laboratory, University of Cambridge, J.J. Thomson Avenue, Cambridge, CB3 0HE, UK}
\def\DAMTP{DAMTP, University of Cambridge, Wilberforce Road, Cambridge, CB3 0WA, UK}
\begin{document}
\title{Fermion-Parity-Based Computation and its Majorana-Zero-Mode Implementation}
\author{Campbell K. McLauchlan}
\affiliation{\DAMTP}
\author{Benjamin Béri}
\affiliation{\DAMTP}
\affiliation{\tcm}
\date{October 2021}
\begin{abstract}
    Majorana zero modes (MZMs) promise a platform for topologically protected fermionic quantum computation.
    However, creating multiple MZMs and generating (directly or via measurements) the requisite transformations (e.g., braids) pose significant challenges.
    We introduce fermion-parity-based computation (FPBC): a measurement-based scheme, modeled on Pauli-based computation, that uses efficient classical processing to virtually increase the number of available MZMs 
    and which, given magic state inputs, operates without transformations.
    FPBC requires all MZM parities to be measurable, but this conflicts with constraints in proposed MZM hardware. 
    We thus introduce a design in which all parities are directly measurable and which is hence well suited for FPBC.
    While developing FPBC, we identify the ``logical braid group" as the fermionic analog of the Clifford group. 
\end{abstract}

\maketitle
Pauli-based computation (PBC) is an intriguing measurement-based alternative to the circuit model of quantum computing~\cite{Bravyi_PBC2016}.
By performing only a minimal number of adaptive Pauli measurements on ``magic state" inputs, PBC allows one to virtually expand the number of qubits in a quantum computer and forego the need to perform Clifford gates~\cite{Gottesman_thesis}, at the cost of efficient classical processing. 
While PBC is formulated for qubits, quantum computing can also use fermionic modes~\cite{Bravyi_Kitaev_Fermionic_QC2002}. 
Fermionic quantum computing is better suited to certain tasks, a notable example being many-electron, including quantum chemistry, simulations~\cite{Bravyi_Kitaev_Fermionic_QC2002,Maj_Based_FQC2018}.
For the fermionic hardware, Majorana zero modes (MZMs) are a promising option, as they offer \mbox{topological protection of quantum information~\cite{Kitaev_chain2001,Kitaev_Anyons2006, MZM_TQC_Review2015,Milestones_Paper2016,Karzig_Majorana2017,Maj_Based_FQC2018}.}

In this work, we formulate fermion-parity-based computation (FPBC), a fermionic counterpart of PBC, and propose a MZM hardware design well-suited to its implementation. 
En route, we identify the ``logical braid group" as the group of all Clifford-like fermionic gates.
For MZM computing, FPBC does not just mean fewer MZMs: it is a new computational model, distinct from the circuit model of previous measurement-based approaches~\cite{Mmt_Only_QC_2008, Mmt_only_no_forced_mmts_2017,Karzig_Majorana2017,Mmt_only_QC_corner_modes_2020,Optimizing_Clifford_Gates_2020}, that eliminates the need to generate braiding and other Clifford-like transformations~\cite{Mmt_Only_QC_2008,Majoranas_Tjunctions2011,Coulomb_assisted_braiding2012,Milestones_Paper2016,Mmt_only_no_forced_mmts_2017,Karzig_Majorana2017,Mmt_only_QC_corner_modes_2020,Optimizing_Clifford_Gates_2020},
and thus avoids the \mbox{associated overheads~\cite{Time_Scales_Coulomb_Block2016,Diabatic_errors2016,Optimizing_Clifford_Gates_2020}.} 

A key requirement for FPBC is to be able to measure potentially complicated strings of MZMs. 
We find that configuration constraints present obstacles to this in existing MZM designs.
Our design, based on top-transmon ingredients~\cite{Transmon_Qubit2007, Transmon_Experiment2008, Top_Transmon2011, Majorana_Ram_Flux_control2013}, is free of such constraints.
Furthermore, unlike circuit-based computing in existing designs~\cite{Majorana_Ram_Flux_control2013,Karzig_Majorana2017, Maj_Based_FQC2018, Mmt_Only_QC_2008, Mmt_only_no_forced_mmts_2017, Mmt_only_QC_corner_modes_2020, Optimizing_Clifford_Gates_2020}, FPBC with our design uses no ancilla MZMs.
The only remaining limitation is locality, as we shall explain.

\textit{Fermionic Quantum Computing and Logical Braids:}
Consider $2n$ Majorana operators $\gamma_j=\gamma_j^\dagger$
($j=1,\ldots,2n$) with anti-commutator $\{\gamma_j,\gamma_k\}=2\delta_{jk}$. 
These $2n$ modes have total fermion parity $\Gamma_{2n} = i^n\prod_{j=1}^{2n}\gamma_j$. 
Let Maj$(m)$ denote the group of Majorana strings generated by $\gamma_1,\ldots,\gamma_m$ and the phase factor $i$,  
and $\overline{\text{Maj}}(m)$ denote the subgroup of Maj$(m)$ that commutes with $\Gamma_{2n}$.
We call the Hermitian elements of $\overline{\text{Maj}}(m)$ fermion parity operators; FPBC will be based on their adaptive measurements. 

To develop FPBC, we first consider fermionic quantum computing in the circuit model, and then, analogously to PBC~\cite{Bravyi_PBC2016,Mithuna_Magic_State_PBC2019},
show how FPBC can simulate it. 
We consider fermionic circuits based on the universal gate set
$\{ W_{4,abcd}=\exp{(i\frac{\pi}{4}\gamma_a\gamma_b\gamma_c\gamma_d)},\,T_{2,ab}=\exp{(\frac{\pi}{8}\gamma_a\gamma_b)}\}$~\cite{Bravyi_Kitaev_Fermionic_QC2002}.
Note that non-commuting $W_4$ operators can generate all possible gates of the form $W_{2k,i_1i_2\ldots i_{2k}}=\exp{(\pm i^{k+1}\frac{\pi}{4}\gamma_{i_1}\gamma_{i_2}\ldots \gamma_{i_{2k}})}$ (including braid operators $W_{2,ab}$~\cite{ReadGreen2000,MZMs_in_pwave_superconds2001,Non-Abelian_TQC_Review,MZM_TQC_Review2015}). 
This is because $W_{4,abcd}$ is a ``logical braid" between $\gamma_a$ and $i\gamma_b\gamma_c\gamma_d$, the latter being a ``logical Majorana" relative to $\gamma_a$
(i.e., a parity-odd, Hermitian Majorana string anti-commuting with $\gamma_a$~\cite{LogicalMajorana}),
and hence can send $W_{2k} \mapsto W_{2k\pm2}$ under conjugation.
Due to this observation we refer to the group generated by the $W_4$ as the logical braid group and its elements, including all $W_{2k}$, as logical braids.
Under conjugation, $W_4$ gates map between strings in $\text{Maj}(2n)$~\cite{Bravyi5p2_2006}; they are Clifford-like. Indeed, logical braids are the only parity-preserving unitaries with this property~\cite{Appfn}.

A key insight for FPBC is that a $T_2$ gate can be implemented via a ``magic state gadget."
Here, we describe this procedure using a dense encoding~\cite{MZM_TQC_Review2015} of magic states, which is more suitable for our fermionic hardware (cf. below) and a more efficient use of quantum resources~\cite{Maj_Based_FQC2018}.
To implement $t$ $T_2$ gates, assume we have a separate register $\mathcal{R}_t$ of $2t+2$ Majoranas with its own conserved parity $\Gamma_{2t+2}$.
Define two sets of 
operators in $\overline{\text{Maj}}(2t+2)$: $\lbrace X_{1},X_{2},\ldots,X_{t}\rbrace$ and $\lbrace s_j=i\gamma_{2j-1}\gamma_{2j}|j=1,\ldots, t\rbrace$,
obeying $\{s_j,X_{j}\}=0$ (for all $j$) and $[s_{j^\prime},X_{j}]=[X_{j^\prime},X_j]=[s_{j^\prime},s_j]=0$ ($j'\neq j$).
Then let register $\mathcal{R}_t$ be in the state $|\psi^{(t)}\rangle=T_{2,12}T_{2,34}\ldots T_{2,2t-1\,2t}|\psi^X\rangle$,
where $|\psi^X\rangle$ is the $+1$-eigenstate of all $X_j$ operators. Thus, the register contains $t$ magic states densely encoded into $2t+2$ Majoranas.
The gate $T_{2,ab}=\exp{(\frac{\pi}{8}\gamma_a\gamma_b)}$ can then be applied to Majoranas $a,b$ in a separate register $\mathcal{R}_n$ with its own conserved parity, using the procedure or ``gadget" (shown in Fig.~\ref{fig:circuit}):
$\mathcal{M}_{j,ab}=R_j\Pi^{m_j}_{is_j\gamma_a\gamma_b}$, for $j\in\lbrace 1,\ldots,t\rbrace$,
enacted on both registers.
Here, $\Pi^{m_j}_{is_j\gamma_a\gamma_b}$ is the projector representing the measurement of $is_j\gamma_a\gamma_b$ with outcome $m_j$, and $R_j=[\exp(\frac{\pi}{4}\gamma_b\gamma_a)]^{(1+m_j)/2}\exp(\frac{\pi}{4}\gamma_a\gamma_bX_{j})$ is a measurement-dependent logical braid.

Magic states can be distilled from multiple approximate copies with logical braids and measurements, using magic state distillation~\cite{Magic_State_Dist,Bravyi5p2_2006,MSD_Experiment2011,Maj_Based_FQC2018} -
this is one of the leading candidates for preparing high-fidelity magic states in Majorana-based architectures, and thereby for achieving fault-tolerant, universal quantum computation~\cite{Maj_Based_FQC2018,Karzig_Majorana2017,QC_with_MFCs,MZM_TQC_Review2015,Maj_Triangle_Code2018,Dark_Space_Stab2020}.
Much work has been devoted to optimising its resource cost~\cite{MSD_Low_Overhead2012,MSD_Low_Overhead2013,MSD_Low_Overhead2017,MSD_Not_Costly2019} and finding alternatives that can also be used to prepare magic states~\cite{Karzig_geom_magic_2016,Karzig_geom_magic_meas2019}. 

\begin{figure}[t]
    \centering
    \includegraphics[width=\linewidth]{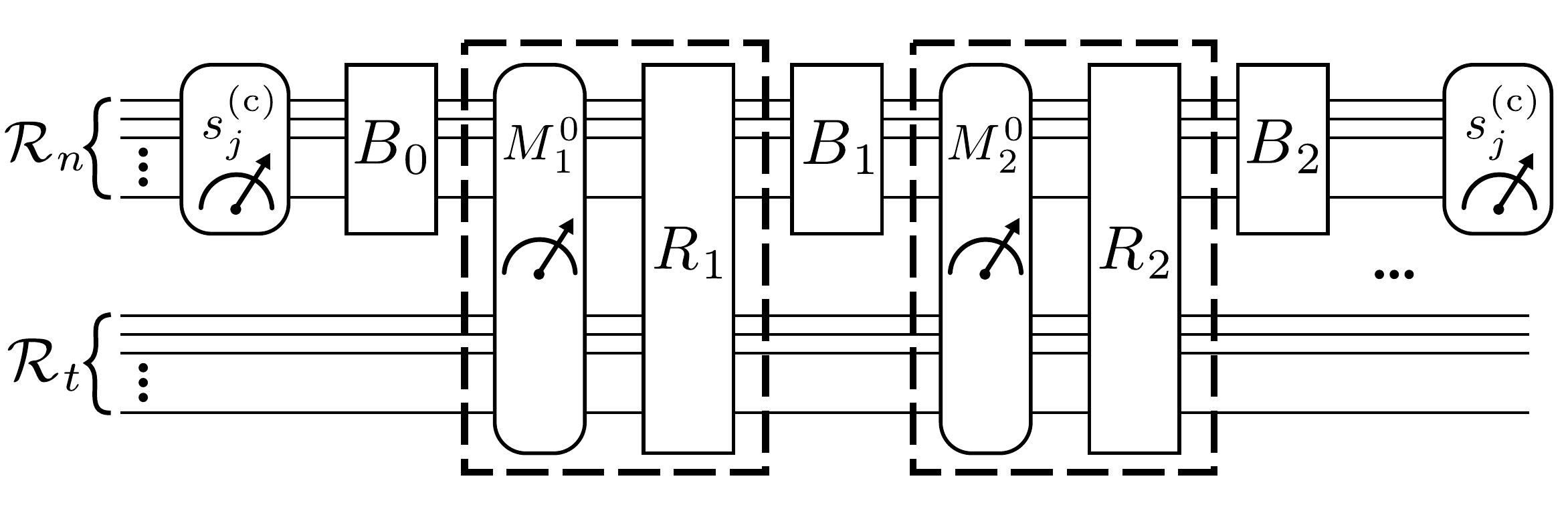}
    \caption{An arbitrary fermionic circuit $C$ on register $\mathcal{R}_n$, to be simulated by FPBC. 
    The $T_2$ gates are enacted via magic state gadgets (dashed boxes), with magic states encoded in register $\mathcal{R}_t$. 
    The gadgets involve a two-register fermion parity measurement $M_j^0$
    and a measurement-dependent logical braid $R_j$.
    $C$ involves $t$ uses of the gadget, interspersed with logical braids $B_i$, and ends with the measurement of all $s_j^\text{(c)}$. 
    Dummy measurements of all $s_j^\text{(c)}$ are appended to the start of $C$. 
    }
    \label{fig:circuit}
\end{figure}

\textit{Fermion-Parity-Based Computation:} 
By performing adaptive fermion parity measurements only on $\mathcal{R}_t$ initialized in state $|\psi^{(t)}\rangle$, and efficient classical processing, FPBC can simulate an arbitrary fermionic circuit $C$ on $\mathcal{R}_n$.
Without loss of generality, we take $C$ to act on $\mathcal{R}_n$ initialized in the $+1$-eigenstate of all $s_j^\text{(c)}$ [$j=1,\ldots,n$; (c) indicates $\mathcal{R}_n$ operators], and that it uses $t=\text{poly}(n)$ $T_2$ gates, interspersed with logical braids $B_i$ (recall that $T_2$ gates and logical braids form a universal gate set~\cite{Bravyi_Kitaev_Fermionic_QC2002}).
$C$ ends by measuring all $s_j^\text{(c)}$ on $\mathcal{R}_n$, i.e., sampling the output distribution. The bit string $\mathbf{b}$ of these final measurement results comprises the output of the circuit.

The first step towards simulating $C$ by FPBC is to replace all $T_2$ gates with magic state gadgets.
As shown in Fig.~\ref{fig:circuit}, $C$ then involves logical braids $B_i$ and $R_i$, fermion parity measurements (labeled $M_i^0$ for $i=1,\ldots, t$) from the $t$ uses of the gadget, 
and final $s_j^\text{(c)}$ measurements.
We denote these final measurements $M_{t+j}^0 \equiv s_j^\text{(c)}$ ($j=1,\ldots, n$).
The next step is to eliminate all logical braids, by commuting the $B_i$ and $R_i$ to the end of the circuit, thereby updating $M_i^0\mapsto M_i\in \overline{\text{Maj}}(2(n+t+2))$. Since the quantum state after the final measurement is discarded, the logical braids now have no effect on the output, and can be deleted. 
For what follows, we append a set of dummy measurements of all $s_j^\text{(c)}$ to the start of $C$, shown in Fig.~\ref{fig:circuit}, which have outcomes $+1$ on $\mathcal{R}_n$'s initial state, and define $M_{j-n}\equiv s_j^\text{(c)}$ for $j=1,\ldots,n$.
At this stage, either $[M_i,M_j]=0$ or $\{M_i,M_j\}=0$ for all $i,\, j$.
We now show that one can limit the measurements to a mutually commuting set, thereby reducing
the number needing to be performed and restricting the computation to $\mathcal{R}_t$.
To achieve this, we go through the $M_i$ sequence, starting with $M_1$, and, if we reach an $i$ such that $\{M_i,M_j\}=0$ for some $j<i$, we delete $M_i$ and replace it with the logical braid
\begin{equation}
V(\lambda_i,\lambda_j) = \frac{\openone + \lambda_i\lambda_j M_i M_j}{\sqrt{2}} = \exp\left(\frac{\pi}{4} \lambda_i\lambda_j M_iM_j\right),
\end{equation}
where $\lambda_j = \pm 1$ is the measurement outcome of $M_j$ and $\lambda_i=\pm1$ is chosen uniformly at random. 
As in PBC~\cite{Bravyi_PBC2016,Mithuna_Magic_State_PBC2019}, this simulates the measurement of $M_i$: 
$\{M_i,M_j\}=0$ implies equal measurement probabilities $1/2$ for $M_i$, which is simulated by uniformly choosing $\lambda_i$ at random, and since $\lambda_j M_j=\openone$ on the pre-measurement state, $V(\lambda_i,\lambda_j)$ produces the correct corresponding post-measurement state.
We then commute $V(\lambda_i,\lambda_j)$ past all $M_{l>i}$. Again, it can then be deleted. 
(Henceforth we leave the resulting updates of $M_{l>i}$ implicit.)
For the final $n$ measurements, if $M_i$ is replaced by its corresponding $V(\lambda_i,\lambda_j)$, we include the classically randomly generated value of $\lambda_i$ in $\mathbf{b}$.

Finally, we are left with a sequence of mutually commuting $M_i$. 
For $j\leq 0$ we still have dummy measurements $M_j=s_j^\text{(c)}$ and $\lambda_{j}=1$, which completely specifies a basis of $\mathcal{R}_n$ (within a given parity sector). 
Hence, since $[M_1,M_j]=0$ for all $j\leq 0$, we can restrict $M_1$ to $\mathcal{R}_t$ without changing its measurement distribution or post-measurement state. 
We can then restrict $M_2$ to $\mathcal{R}_t$, since $[M_2,M_j]=0$ for all $j\leq 1$, and so on.
Doing this for all $M_{j>0}$ and then discarding $M_{j\leq 0}$, we thereby restrict the entire computation to $\mathcal{R}_t$.
There are only $t$ independent commuting parities (besides $\Gamma_{2t+2}$) on $\mathcal{R}_t$. 
Using efficient classical computation~\cite{nielsen_chuang}, one computes the outcomes for those $M_j$ dependent on preceding $M_i$, and deletes them. 
The quantum part of the computation is thus reduced to the adaptive measurement of $p\leq t$ mutually commuting parities on $\mathcal{R}_t$.
The remaining entries in $\mathbf{b}$ (those not filled by the classically sampled $\lambda_i$) come from the outcomes of those $M_{j>t}$ that were not replaced by logical braids; via the process described above these outcomes are either measured explicitly or computed classically.
Thus, assisted with poly$(n)$-time classical processing, we can sample from $C$'s output distribution using FPBC.

\textit{FPBC Hardware:} 
To perform FPBC, one needs hardware such that the fermion parities $M_i$ on $\mathcal{R}_t$ are measurable. 
In existing MZM designs, one can measure only those $M_i$ that meet certain configuration constraints.
For  example, in Majorana transmon setups~\cite{Majorana_Ram_Flux_control2013,Milestones_Paper2016,Maj_Based_FQC2018,Smith_Bartlett_Readout2020}
one has ``readout islands" with a pair of MZMs on each, and only those $M_i$ are measurable that feature no MZM without its readout-island pair.
Magic state gadgets in these setups, however, require inter-island logical braids and/or measurements,
which can generate FPBCs with unmeasurable $M_i$~\cite{Appfn}.
(Subsequent braids may bring $M_i$ to a measurable configuration; however in typical setups, and for large $t$, only for a vanishingly small proportion of $M_i$ does just a constant-in-$t$ number of such braids suffice~\cite{Appfn}.)

\begin{figure}
    \centering
    \includegraphics[width=\linewidth]{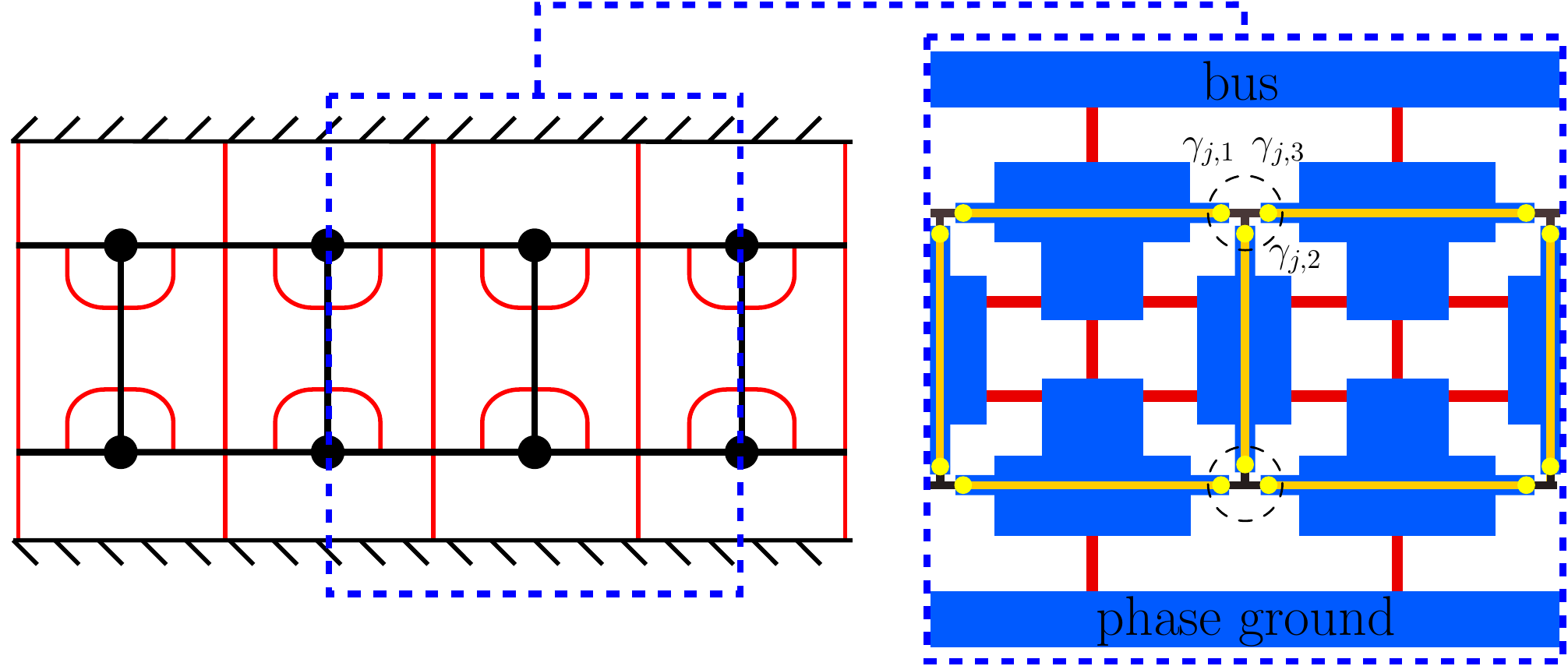}
    \caption{Section of FPBC hardware.
    Left: top and bottom black dashed regions are superconducting plates - the bus and phase ground respectively. 
    Thick black lines correspond to nanowire-hosting superconducting islands while black dots indicate tri-junctions between nanowires. 
    The design may be continued to the right and left.
    Right: more detailed illustration of the region indicated. 
    Superconducting plates and islands are shown in blue.
    Nanowires (yellow) host Majorana bound states (labelled $1$, $2$ and $3$)
    at their ends, combining to form a single Majorana zero mode at each tri-junction (dashed circles).
    In both panels, tunable Josephson junctions are indicated with red lines.}
    \label{fig:Model}
\end{figure}

We introduce a design (sketched in Fig.~\ref{fig:Model}) that is free of such configuration constraints.
The core ingredients and the corresponding physical considerations are based on Refs.~\onlinecite{Top_Transmon2011,Majorana_Ram_Flux_control2013}.
The MZMs appear at tri-junctions between Majorana bound states at the ends of spin-orbit nanowires on superconducting islands~\cite{Kitaev_chain2001, von_Oppen_Majorana_Nanowires2010, Das_Sarma_Majorana_Nanowire_2010, Alicea_Review_2012, Nature_Maj_Experiment2012, Science_Maj_Experiment_2012, Fractional_AC_Josephson_Experiment_2012, Maj_Experiment_Nano_Lett_2012,  Beenakker_Review_2013, Maj_Epitaxy_2015, Albrecht-et-al-exponential-2016, Maj_Experiment_2_2017, Lutchyn_Review_2018, vaitiekenas2021zero,Nanowire_Review_2021}.
The islands are connected via tunable Josephson junctions (JJs) to other islands and, for some islands, also to one of two superconducting plates, called the bus and phase ground.
This entire system is enclosed within a transmission line resonator. 
As we next explain, this has a parity-dependent resonance frequency, which allows one to measure the $M_i$ via dispersive readout~\cite{Top_Transmon2011, Transmon_Qubit2007,Coulomb_assisted_braiding2012,Maj_Circuit_QED2015,Smith_Bartlett_Readout2020,Dispersive_Regime_2007}.
[The similar parity dependence of the transmon groundstate can be used to implement (approximate) $T_2$ gates~\cite{Top_Transmon2011,Majorana_Ram_Flux_control2013} and hence to supply (noisy) magic states for distillation.]

A JJ between superconductors $a$ and $b$, with phases $\phi_a$ and $\phi_b$ of their superconducting order parameters respectively, contributes a term $E_{J,ab}(1- \cos{(\phi_a-\phi_b)})$ to the Hamiltonian~\cite{Tinkham_Superconductivity}, for some energy $E_{J,ab}$ that can be controlled by 
fluxes or electrostatic gates~\cite{Majorana_Ram_Flux_control2013,Gate_Controlled_JJ_1_2015, Gate_Controlled_JJ_2_2015}. 
By tuning these control parameters, each JJ can thus be turned on or off, corresponding to Josephson energy $E_{J,ab}^{\text{(on/off)}}$, 
where $E_{J,ab}^{\text{(on)}} \gg E_{J,ab}^{\text{(off)}}$.
The $k$\textsuperscript{th} island has charging energy scale $E_{C,k} = e^2/2C_k$ for total capacitance $C_k$ between island $k$ and all other superconductors to which it is connected.
We take $E_{J,ab}^{\text{(on/off)}}$ to be of the same order of magnitude for all $ab$ and similarly for $E_{C,k}$ across all $k$. 
In what follows, each island will be connected (directly or via a path of ``on" JJs) to either the bus or phase ground; we call these bus-connected and ground-connected islands, respectively. 
We assume that the Josephson energy dominates for all islands, namely that $E_{J,ak}^{\text{(off)}}\gtrsim E_{C,k}$ for all islands $a,k$ with JJs connecting them.
Given this, and that $E_{J,ak}^{\text{(on)}}/E_{C,k} \gg E_{J,bl}^{\text{(off)}}/E_{C,l}$ 
(for all $a,k; b,l$ with JJs),
any bus-connected (ground-connected) island has superconducting phase pinned to that of the bus (phase ground)~\cite{Majorana_Ram_Flux_control2013}.
Hence, we can view the entire system as having a single effective JJ between bus- and ground-connected subsystems.
The corresponding Josephson and charging energies are $E_J$ and $E_C$, respectively, associated to sums of (``off"-state) Josephson energies and capacitances between the bus- and ground-connected subsystems.
We will take $E_J \gg E_{C}$, i.e., work in the transmon regime~\cite{Transmon_Qubit2007}.

The $j^\text{th}$ tri-junction has  Hamiltonian~\cite{Coulomb_assisted_braiding2012,Majorana_Ram_Flux_control2013,Majoranas_Tjunctions2011,SauClarkeTewari2011}
\begin{equation}
\!\!V_{M,j}=\frac{E_{M}}{2}\!\!\!\sum_{a,b,c=1}^{3}\!\!\epsilon_{abc}A_{j,a}(i\gamma_{j,b}\gamma_{j,c})=iE_M |\mathbf{A}_j| \gamma_{j,+}\gamma_{j,-}.
\end{equation}
Here $\gamma_{j,1}$, $\gamma_{j,2}$, and $\gamma_{j,3}$ are the Majorana bound states at the ends of the nanowires at the $j^\text{th}$ tri-junction and $E_M$ is the overall tri-junction energy scale. 
The $A_{j,a}$ include phase-dependent cosines encoding the $4\pi$-periodic Josephson effect~\cite{Kitaev_chain2001,Stab_4pi_Josephson_Effect_2011} (cf. the flux-dependent couplings of Refs.~\onlinecite{Coulomb_assisted_braiding2012,Majorana_Ram_Flux_control2013}) and $|\mathbf{A}_j|^2 = \sum_{a=1}^3A_{j,a}^2$. 
The coupling of the three Majorana bound states results in a MZM which we denote $\gamma_{j,0}$, and two more Majorana modes $\gamma_{j,+}$ and $\gamma_{j,-}$ encoding a nonzero-energy fermion~\cite{Appfn}. 

We take $E_M\ll\hbar \Omega_0$, where $\Omega_0\approx \sqrt{8E_JE_C}/\hbar$ sets the transmon level spacing~\cite{Transmon_Qubit2007}; the system is thus a top-transmon perturbed by the $V_{M,j}$~\cite{Appfn}. 
For low-lying levels, $V_{M,j}$ can be taken at zero bus-ground phase difference~\cite{Majorana_Ram_Flux_control2013}.
Without $V_{M,j}$, the effect of Majorana bound states is a contribution $(-1)^m\delta\varepsilon_m \mathcal{P}$ to the $m$\textsuperscript{th} transmon level energy, where  $\delta\varepsilon_m\propto \exp(-\sqrt{8E_J/E_C})$, and $\mathcal{P}$ is the joint fermion parity of Majorana bound states on bus-connected islands~\cite{Majorana_Ram_Flux_control2013,Appfn}. 
In considering $V_{M,j}$, we work with $E_M \gg \delta\varepsilon_m$ and to first order in $\delta\varepsilon_m/E_M$. 
This allows one to project $\mathcal{P}$ to $Q=P_-\mathcal{P}P_-^\dagger$, where $P_- = \prod_j P_{j,-}$ with $P_{j,-}=(1-i\gamma_{j,+}\gamma_{j,-})/2$.

Dispersive readout thus measures $Q$. 
The only fermion operators contributing to $Q$ are the $\gamma_{j,0}$, with $\gamma_{j,0}$ entering $Q$ if and only if there are an odd number of bus-connected islands around tri-junction $j$. 
With one bus-connected island at tri-junction $j$, only the $\gamma_{j,a}$ on that island features in $\mathcal{P}$; 
then the projection gives $ P_- \gamma_{j,a}P_-^\dagger = A_{j,a}\gamma_{j,0}/|\mathbf{A}_j|$~\cite{Appfn}.
For three bus-connected islands at tri-junction $j$, all three $\gamma_{j,a}$ feature in $\mathcal{P}$; we have 
$P_-(i\gamma_{j,a}\gamma_{j,b}\gamma_{j,c})P_-^\dagger = P_-(i\gamma_{j,0}\gamma_{j,+}\gamma_{j,-})P_-^\dagger = -\gamma_{j,0}$. 
With two bus-connected islands, tri-junction $j$ contributes a scalar factor to $Q$: 
$P_-(i\gamma_{j,a}\gamma_{j,b})P_-^\dagger = - \sum_c\epsilon_{abc}A_{j,c}/|\mathbf{A}_j|$. 
For a given configuration of bus- and ground-connected islands, and focusing on the lowest two transmon levels ($m=0,1$), $Q$ can be measured via the shift
\begin{align}\label{eqn:Shift_freq}
    \omega_{\text{shift}} = \frac{C}{2} (\delta\varepsilon_1 + \delta\varepsilon_0) \prod_{\substack{j\,|\,\text{1, 2 islands}\\ \text{bus-connected}}}\frac{A_{j,\alpha_j}}{|\mathbf{A}_j|}
\end{align}
in the resonator's resonance frequency upon flipping $Q$'s eigenvalue.
Here, $C$ is a constant dependent on transmon and resonator parameters~ \cite{Appfn}, the product runs over tri-junctions around which one or two islands are bus-connected, and $\alpha_j$ is set by the $j^\text{th}$ tri-junction's bus-connected island configuration.

\begin{figure}[t]
    \centering
    \includegraphics[width=\linewidth]{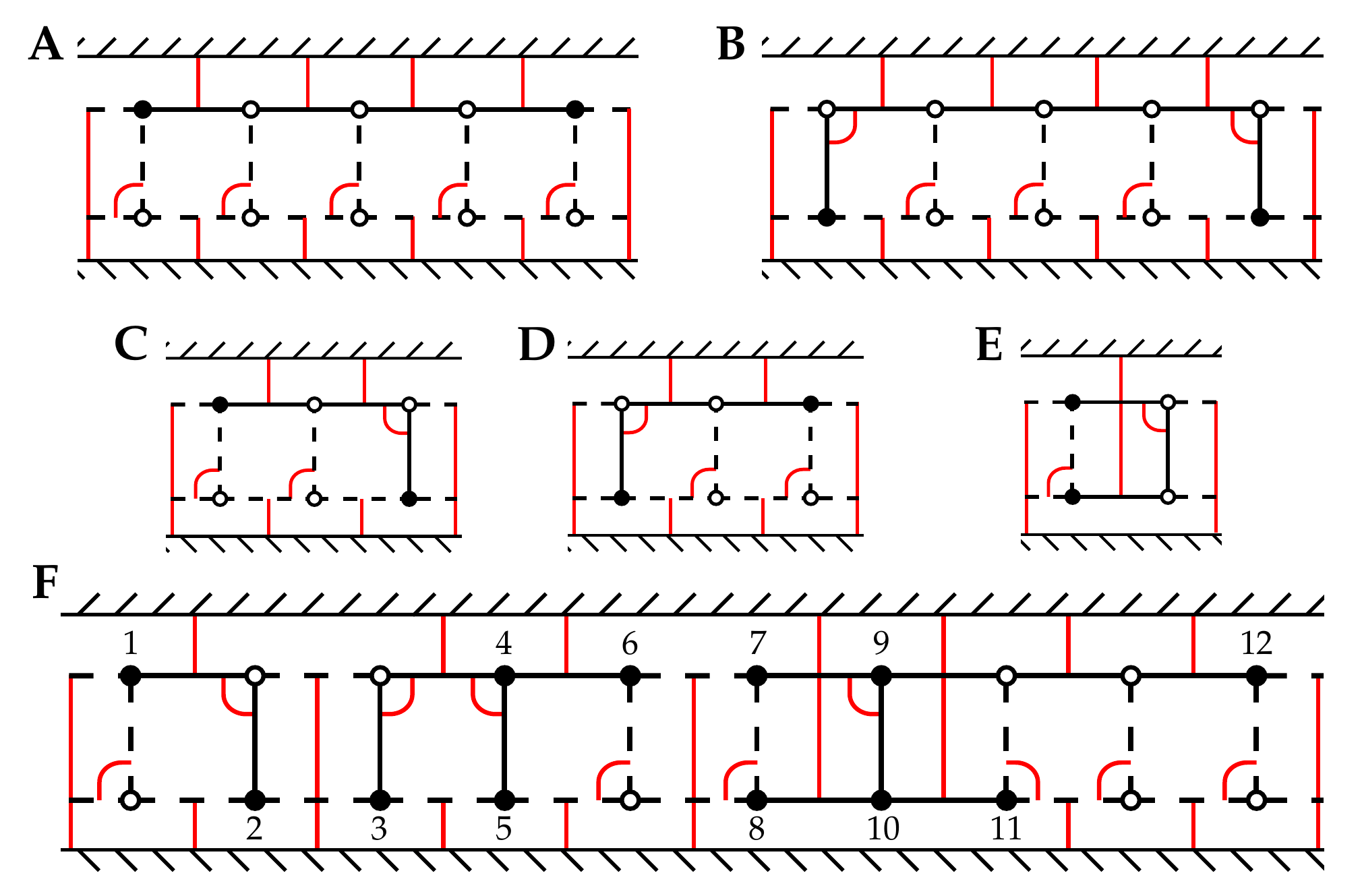}
    \caption{The configuration for measuring any parity operator $M$ is obtained by combining five basic paths (labelled A-E) of bus-connected islands; 
    these are the shortest clockwise paths connecting pairs of MZMs. 
    Paths A-D have variable length while Path E has a fixed length.
    Solid (dashed) lines in all panels indicate bus-connected (ground-connected) islands. Filled (unfilled) dots are MZMs that do (do not) feature in $M$. 
    Only ``on" Josephson junctions are indicated (red lines); others are omitted for clarity.
    As an example, Panel F shows the measurement configuration for a $12$-MZM parity operator. 
    The MZMs are indexed as in the main text and basic paths connect MZMs $j$ and $j+1$ for odd $j$. 
    When combining the basic paths, precisely those islands belonging to an odd number of paths are bus-connected.    
    }
    \label{fig:Algorithm_2_nodes}
\end{figure}

\textit{Arbitrary parity measurement:} 
The preceding discussion hints that our design allows for the measurement of any MZM parity $M_i$. We now explain this in detail.
Since $\gamma_{j,0}$ features in $Q$ when an odd number of islands surrounding it are bus-connected, precisely those $\gamma_{j,0}$ that are endpoints of a path of bus-connected islands feature in $Q$ (see Fig.~\ref{fig:Algorithm_2_nodes}).
We convert this observation into the following prescription: 
Let $M_i$ feature those $\gamma_{j,0}$ with $j$ in some set $S_{M_i}$. 
Index the labels $j\in S_{M_i}$ with $k_j=1,\ldots , |S_{M_i}|$ such that $k_{j'}<k_j$ if $\gamma_{j,0}$ is to the right of or directly below  $\gamma_{j',0}$ (cf. Fig.~\ref{fig:Algorithm_2_nodes}F). 
Pair the $\gamma_{j,0}$ with successive $k_j$ (i.e., first with second, third with fourth, etc.) and, for each pair, draw the shortest clockwise path of islands between the two MZMs. 
We then connect all islands featuring in an odd (even) number of paths to the bus (phase ground).

The measurement configuration thus formed for $M_i$ is realizable with the JJs indicated in Fig.~\ref{fig:Model}.
The shortest clockwise path between a MZM pair is one of five basic paths shown in Fig.~\ref{fig:Algorithm_2_nodes}A-\ref{fig:Algorithm_2_nodes}E.
A combination of these is realizable if there exists a path through ``on" JJs from every bus-connected (ground-connected) island to the bus (phase ground), and only ``off" JJs link bus-connected and ground-connected subsystems. 
In Fig.~\ref{fig:Algorithm_2_nodes} we indicate how the JJs achieve this for each basic path.
All pairs of basic paths are trivially realizable if we omit Path E, since then no bottom-row horizontal island is bus-connected, and bus-connected vertical islands are always adjacent to a bus-connected horizontal island.
There are a further five pairs that include Path E (EI for I$=$A$,\ldots,$E) which all can be checked to be realizable.
Hence so too are all measurement configurations produced by the prescription. 
A 12-MZM example is  shown in Fig.~\ref{fig:Algorithm_2_nodes}F.

We thus find that implementing FPBC with our design could reduce the resource cost of MZM-based quantum computation.
The required number of MZMs is reduced, both since the computation is restricted to $\mathcal{R}_t$ and since no ancilla MZMs are needed.
We also reduce the total number of operations, by deleting all logical braids, and avoiding the overheads from braiding processes~\cite{Braiding}.

However, there is a residual limitation of locality in our design; it cannot be used for arbitrarily large registers $\mathcal{R}_t$.
In ideal systems, this arises via the suppression of $\omega_{\text{shift}}$ with the number $L$ of islands in the system. 
Since $E_J$ and $E_C$ characterize the effective JJ between bus- and ground-connected subsystems, they scale as $O(L)$ and $O(L^{-1})$, respectively, so we have $\delta\varepsilon_m\sim \exp(-cL)$ with $c$ a constant~\cite{E_J}.
$\omega_{\text{shift}}$ is further suppressed by a factor $A_{j,\alpha_j}/|\mathbf{A}_j|$ for every tri-junction around which one or two islands are bus-connected.
Hence, increasing the size of $\mathcal{R}_t$ requires the ability to resolve increasingly small $\omega_{\text{shift}}$.
In realistic setups, larger and more complex systems may also incur more accidental features (e.g., material defects, accidental quantum dots).
These may reduce coherence times and measurement fidelities~\cite{Defects_Nature2019,Trapped_QPs_transmons2020, Review_Decoherence_Supercond_Qubits2021, Materials_Challenge_QC2021}, and pose challenges for calibrating parity measurements.
However, one may be able to use techniques similar to those for mapping defect features and locations in transmon systems~\cite{Defects_Nature2019, Defect_location2020,TLS_Ensemble2021} to facilitate calibration, and reduce the number of defects with new materials techniques~\cite{Reducing_JJ_Capacitive_Loss2017, Coherence_Times_Transmon2021, Better_JJ_Fab2021, Bandaged_JJs2021, Materials_Challenge_QC2021}.
Additionally, in larger setups, more JJs allow for more quasi-particle poisoning events, which are not inhibited by a strong charging energy as they are in other designs~\cite{Karzig_Majorana2017}. 
However, these rates may still be small enough to be neglected on relevant timescales~\cite{MZM_Coherence_Times2018,Karzig_QPP_Maj_Qubits_2021}, and could be further reduced with quasi-particle traps~\cite{QP_Traps_2016,QP_Traps_2016_2}.

\textit{Conclusion:}
We have introduced fermion-parity-based computation, a low-resource-cost, measurement-based model of quantum computing with Majoranas, and have explained how it is able to simulate any fermionic quantum circuit.
We introduced a MZM hardware design that is free of constraints on measurable operators, beyond those of locality, and hence is well-suited to FPBC. 
We expect that a $t$-MZM FPBC, similarly to PBC~\cite{Bravyi_PBC2016}, can be simulated by a $(t-k)$-MZM FPBC if supplemented by exp($k$)-time classical processing; thus with FPBC one could minimize the quantum resources needed for fermionic computation. 
To overcome the locality constraint, future work could consider how multiple copies of our setup might be used to measure larger fermion parities. We expect one could adapt existing work on transmon qubit-parity measurements~\cite{Qubit_Parity_Mmt2010, Parity_Mmt_2_2010, Parity_Mmt_2013, Qubit_Parity_Mmt2018} to our Majorana-transmon setup, wherein frequency shifts are produced only by (suitably generalized~\cite{Akhmerov2010}) fermion parities.
%
These larger setups could be made feasible by adapting transmon-based methods for improved measurements~\cite{Better_Transmon_Mmt_1_2017, Better_transmon_mmts2020} and large device design and calibration~\cite{2D_Cavity_grid_2009, Quantum_Supremacy2019, Calibrating_2_qubit_gate2019, Calibrating_devices_review2021, Suppressing_ZZ_int_scaling2021}.
One could also investigate our hardware design in the context of Majorana fermion codes~\cite{Maj_Ferm_Codes,QC_with_MFCs}, taking advantage of the large set of measurable operators.

We thank R. Jozsa for introducing us to PBC, and thank him and S. Strelchuk for useful discussions.
This work was supported by an EPSRC Studentship, EPSRC grant EP/S019324/1, and the ERC Starting Grant No. 678795 TopInSy.

\section*{Supplemental Material}

\renewcommand{\thesubsection}{\Roman{subsection}}
\renewcommand{\theequation}{S\arabic{equation}}
\renewcommand{\thefigure}{S\arabic{figure}}

\setcounter{equation}{0}
\setcounter{figure}{0}

\subsection{Logical braids exhaust all parity-preserving Clifford-like unitaries}

The Clifford group, defined for qubits, is the group of unitary operators that send strings of Pauli operators to other such strings under conjugation~\cite{Gottesman_thesis}.
Here, in analogy with the Clifford group, we demonstrate that $W_4$ operators can generate all parity-preserving unitaries that send Majorana strings to Majorana strings.
We denote the group of all such Clifford-like unitaries $\mathcal{B}_{2n}$ for a system of Majoranas $\gamma_1,\ldots,\gamma_{2n}$.
Suppose $W$ is some basic operator satisfying $W\gamma_{2n}W^\dagger = \Upsilon$ for $\Upsilon\in \text{Maj}(2n)$. Then we can obtain any other operator $U$ satisfying the same property through $U = W U^\prime$, where $U^\prime = W^\dagger U$ satisfies $U^\prime \gamma_{2n} (U^\prime)^\dagger = \gamma_{2n}$. We can see that $U^\prime \in \mathcal{B}_{2n-1}$, since any $U^\prime \gamma_i (U^\prime)^\dagger$, $i\neq 2n$, must be parity-odd and anti-commute with $\gamma_{2n}$, and hence cannot feature $\gamma_{2n}$.
(While considering $U = V W U^\prime$ with $V\Upsilon V^\dagger = \Upsilon$ appears more general, any $V\in\mathcal{B}_{2n}$ satisfying $V\Upsilon V^\dagger = \Upsilon$ is such that $V W = W V^\prime$ for some $V^\prime \in\mathcal{B}_{2n-1}$.)
This factorization is analogous to that for the Clifford group~\cite{nielsen_chuang}, and as in that case, it allows one to build $U\in\mathcal{B}_{2n}$ iteratively from a sequence of basic operators.

We now demonstrate that $W_4$ is a suitable basic operator. 
If $\Upsilon$ has no $\gamma_{2n}$ factor, $W=\exp(\frac{\pi}{4}\Upsilon\gamma_{2n})$ satisfies the required property, $W\gamma_{2n}W^\dagger = \Upsilon$.
Otherwise take some $\gamma_j$ absent from $\Upsilon$; $\Upsilon^\prime =\exp(\frac{\pi}{4}\gamma_{2n}\gamma_{j})\Upsilon \exp(-\frac{\pi}{4}\gamma_{2n}\gamma_{j})$ contains no factor of $\gamma_{2n}$.
Hence the operator $W=\exp(-\frac{\pi}{4}\gamma_{2n}\gamma_{j})\exp(\frac{\pi}{4}\Upsilon^{\prime}\gamma_{2n})$ is of the desired type.
Since $W_4$ can send $W_{2m}\mapsto W_{2m\pm 2}$ under conjugation, for any $1<m<n-1$, the $W_4$ can generate the above basic operators whenever $n>2$.
They can also map between various $W_{2m}$ operators, since, for example, $\exp(\frac{\pi}{4}\gamma_a \Upsilon) = U \exp(\frac{\pi}{4}\gamma_b\Upsilon) U^\dagger$ where
\begin{align}
    U = \exp\left(\frac{\pi}{4}\gamma_a\Upsilon^\prime\right) \exp\left(\frac{\pi}{4}\Upsilon^\prime \gamma_b\right)
\end{align}
for some 3-Majorana $\Upsilon^\prime$ that has even overlap with $\Upsilon$.

This shows that all elements of $\mathcal{B}_{2n}$ are products of $W_{4}$ operators.
Hence logical braids exhaust $\mathcal{B}_{2n}$: the group $\mathcal{B}_{2n}$ is the logical braid group.

\subsection{Measurements of all fermion parities can be generated by FPBC}
\label{app:gadgets}

Here, we show that, for dense encodings and for $t\leq n$, FPBC can result in arbitrary fermion parities in the measurement sequence.
This is true for any magic state gadget one might utilize, provided that it uses one magic state to implement an $\exp(\theta\gamma_a\gamma_b)$ gate, or more generally $\exp(i\theta \Gamma)$ with fermion parity $\Gamma$. 
(We comment on sparse encodings and other more general cases at the end of this section.)
Thus, if MZMs are hosted in a setup which has configuration constraints on measurable operators, it is possible for FPBC to require the measurement of unmeasurable operators; in this case, braids or other operations must supplement FPBC.

For the fermionic circuit being simulated, we have two Majorana registers, each in a fixed parity sector. 
A computational register $\mathcal{R}_n$ of $2n+2$ Majoranas is in an arbitrary state, while the register $\mathcal{R}_t$, with $2t+2$ Majoranas, is in a state encoding $t$ magic states (for concreteness, let this state be the $\ket{\psi^{(t)}}$ defined in the main text).
We consider a gadget $\mathcal{M}_j$ enacting $\exp(i\theta \Gamma_j)\notin \mathcal{B}_{2n+2}$ (with $\Gamma_j=-i\gamma_{a_j}\gamma_{b_j}$)
on $\mathcal{R}_n$ while using the $j$\textsuperscript{th} magic state.
[More general $\Gamma_j$ are achieved upon suitably altering the purely $\mathcal{R}_n$ logical braids $B_i$ described below.]
Apart from some basic assumptions set out below, our considerations hold regardless of the specific form of $\mathcal{M}_j$.
As in the main text, we define $s_l = i\gamma_{2l-1}\gamma_{2l}$ and $X_l$ such that $[X_l, X_{l'}]=[X_l,s_{l'}]=0$ for $l\neq l'$ and $\{X_l,s_l\}=0$, $l\in\{1,2,\ldots,t\}$.
Since the $s_l$ eigenvalues label a complete eigenbasis for $\mathcal{R}_t$ with definite $\Gamma_{2t+2}$ eigenvalue, and the $X_l$ flip these eigenvalues, the operators $s_l$, $X_l$ and $\Gamma_{2t+2}$ generate  $\overline{\text{Maj}}(2t+2)$. 

\begin{figure}[t]
    \centering
    \includegraphics[width=0.99\linewidth]{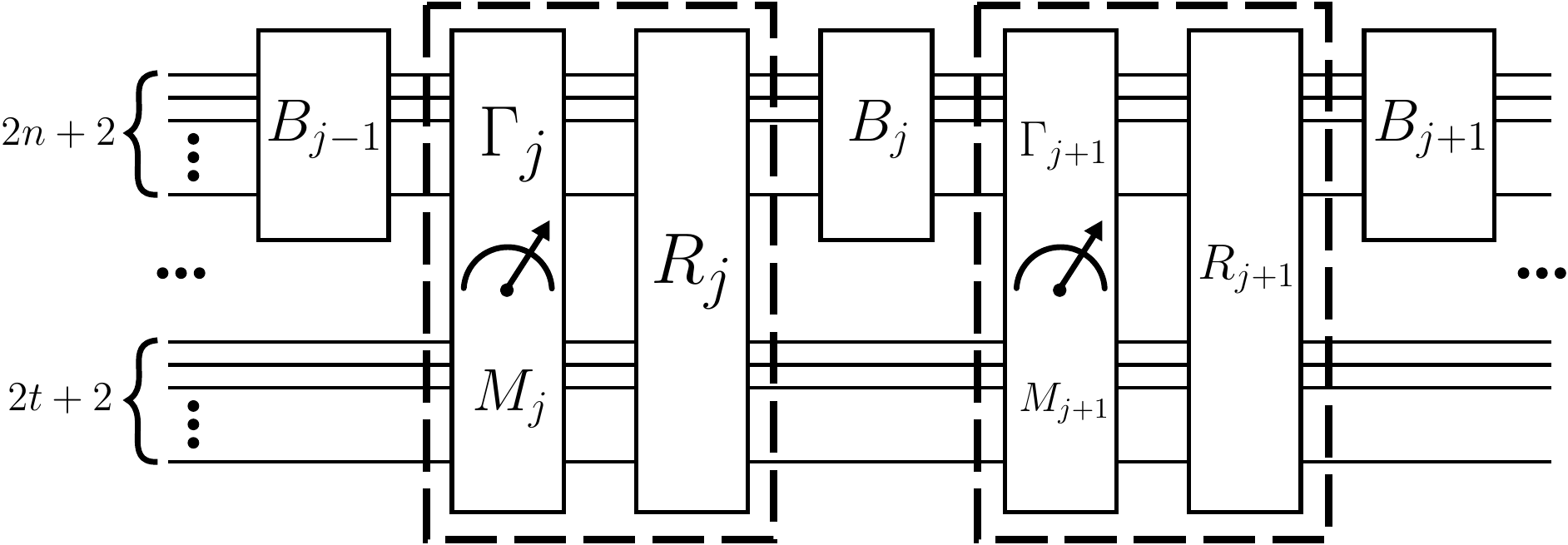}
    \caption{Section of a generic fermionic circuit. The gadget (dashed boxes) labelled $i$ (for $i=j,\, j+1$) uses the $i$\textsuperscript{th} magic state. It involves a two-register measurement followed by a logical braid $R_i$. The gadgets are all interspersed with logical braids $B_i$ acting on the computational register.}
    \label{fig:Ferm_Circuit}
\end{figure}

We assume that $\mathcal{M}_j$ is composed of fermion parity measurements and logical braids, and acts trivially on $\gamma_c, \, c\neq a_j,b_j$
in $\mathcal{R}_n$, and on magic states $j'\neq j$ in $\mathcal{R}_t$.
We use the convention in Fig.~\ref{fig:Ferm_Circuit}: the gadget begins with measurements and finishes with logical braids.
(This is equivalent to other $\mathcal{M}_j$ conventions upon commuting all logical braids in $\mathcal{M}_j$ past its measurements.)
Then, the gadget's measurements cannot be trivial on either register: that would either destroy some logical information or collapse a magic state.
Therefore,  these measurements are all in the form $\mathcal{P}_j=\Gamma_j M_j$ for some operator $\openone \neq M_j\in\langle X_j, s_j\rangle$, where $\langle X_j, s_j\rangle$ is the set of operators generated by $s_j,X_j$ and the phase factor $i$.
There need  be only one such measurement, since any subsequent one must either have certain measurement outcome, in which case it is deleted and the outcome computed, or anti-commute with a preceding measurement; in this case we replace it with a logical braid, as described in the main text.
For concreteness, we assume that these measurements, together with those of $\{s^{\text{(c)}}_a\}$ in $\mathcal{R}_n$ at the beginning and the end of the computation (as described in the main text), are the only measurements in the circuit.

Since the gadget serves only to enact the gate $\exp(i \theta \Gamma_j)$, it must leave the two registers disentangled after its use.  
Hence $R_j$ in Fig.~\ref{fig:Ferm_Circuit} (which is adaptive, i.e., could depend on the gadget's measurement outcome) must disentangle the two registers after the measurement of $\Gamma_j M_j$. 
This implies that $R_j$ neither commutes with $\Gamma_j M_j$ nor acts trivially in either register, nor is a product solely of factors that do.
Therefore, upon absorbing any purely $\mathcal{R}_n$ logical braids from $R_j$ into $B_{j+1}$ (thus making $B_{j+1}$ adaptive) and commuting any purely $\mathcal{R}_t$ logical braids in $R_j$ past the circuit (these commute with the $M_{j'\neq j}$ and $R_{j'\neq j}$ by our assumption that the gadget acts trivially on magic states $j'\neq j$), we are left with  $R_j=\exp(i\frac{\pi}{4}\Gamma_j N_j)$ for some Hermitian $N_j\in\langle X_j, s_j\rangle$ satisfying $\lbrace N_j, M_j\rbrace = 0$ and possibly depending on the gadget's measurement outcome.
Since $N_j$ and $M_j$ anti-commute, $\langle N_j, M_j\rangle = \langle X_j, s_j\rangle$.

We now show that FPBC can require arbitrary fermion parity measurements.
We first commute all the $B_i$ past the circuit. 
For the $j$\textsuperscript{th} gadget, this replaces $\Gamma_j$ by a parity operator $\mathcal{O}_j$, respectively updating $\mathcal{P}_j$ and $R_j$ to
\begin{equation}
\mathcal{P}_j= \mathcal{O}_j M_{j},\qquad R_j=\exp(i\frac{\pi}{4}\mathcal{O}_j N_j).
\end{equation}
(We leave the symbols for $\mathcal{P}_j$ and $R_j$ unchanged through these updates until later on, when it is convenient to define new ones.) 
The final measurements in $\mathcal{R}_n$ at the end of the circuit are updated to $\Lambda_a$ (with $a= 1,\ldots, n$). 
Choosing a fermionic circuit to be simulated and a set of gadget measurement outcomes determines the $B_i$, so that each $B_i$ can be set to be any member of $\mathcal{B}_{2n+2}$. 
Thus, the $\mathcal{O}_j$ can be set to be any arbitrary fermion parities (apart from $\openone$ or $\Gamma_{2t+2}$), given a suitable choice of circuit and of gadget measurement outcomes.

For the following, let $\mathcal{S}=\langle \Gamma_{2n+2}, s_1^\text{(c)},\ldots, s_n^\text{(c)}\rangle$ where $\Gamma_{2n+2}$ is $\mathcal{R}_n$'s overall fermion parity. 
We make the assumption that $n\geq t$. 
Since we can choose the $\mathcal{O}_i$ arbitrarily and there are only $t$ of these operators, let us choose them to be such that no non-trivial product of the $\mathcal{O}_i$ is a member of $\mathcal{S}$. That is, the $\mathcal{O}_i$ are ``independent" from $\mathcal{S}$.
For fermion parity operator $O$, define $\mathcal{A}(O)$ to be the set of all $s_a^\text{(c)}$ anti-commuting with $O$.
Since $\{s_a^\text{(c)}\}$ and $\Gamma_{2n+2}$ label a complete basis in $\mathcal{R}_n$, the group $\mathcal{S}$ is maximal:
$\mathcal{R}_n$ has no parity-preserving operator outside of $\mathcal{S}$ that commutes with $\mathcal{S}$.
This, along with the independence of the $\mathcal{O}_i$ from $\mathcal{S}$, implies that if $O$ is a non-trivial product of the $\mathcal{O}_i$, then $\mathcal{A}(O)\neq \emptyset$.

We next study the consequences of commuting the $R_j$ past the $\mathcal{P}_j$ measurements.
For the resulting updates, we shall use the following, which can be easily verified:
\begin{lemma}\label{lem:braidABC}
Consider fermion parities $A$, $B$, $C$ satisfying $AB = (-1)^{a} BA$, for $a\in\lbrace 0,1\rbrace $ and $\lbrace AB, C\rbrace = 0$.
Then conjugation with the logical braid $U_{AB}=\exp(i^{1-a}\frac{\pi}{4}AB)= \frac{1}{\sqrt{2}}(1 + i^{1-a} AB)$ gives
\begin{equation}
U_{AB}^\dagger CU_{AB} = i^{1-a} CAB.
\end{equation}
\end{lemma}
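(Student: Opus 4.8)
The plan is to reduce the statement to the standard $\tfrac{\pi}{4}$-rotation (braiding) identity by absorbing the phase into a single anti-Hermitian operator. First I would set $P \equiv i^{1-a}AB$ and confirm that the two displayed forms of $U_{AB}$ agree. Since $A$ and $B$ are Hermitian fermion parities, $A^2=B^2=1$, and $AB=(-1)^aBA$ gives $(AB)^2=(-1)^a$; hence $P^2 = i^{2(1-a)}(AB)^2 = (-1)^{1-a}(-1)^a = -1$. With $P^2=-1$ the Taylor series of $\exp(\tfrac{\pi}{4}P)$ collapses to $\cos\tfrac{\pi}{4}+P\sin\tfrac{\pi}{4}=\tfrac{1}{\sqrt2}(1+P)$, matching the stated $U_{AB}$. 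The same inputs give $P^\dagger=-P$, so $U_{AB}$ is unitary with $U_{AB}^\dagger=\tfrac{1}{\sqrt2}(1-P)=\exp(-\tfrac{\pi}{4}P)$.

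Next I would feed in the anticommutation hypothesis. Because $\{AB,C\}=0$ and $P$ is a scalar multiple of $AB$, we have $PC=-CP$, whence $P^nC=C(-P)^n$ and so $C$ passes through the exponential with a sign flip of the generator: $\exp(-\tfrac{\pi}{4}P)\,C=C\exp(\tfrac{\pi}{4}P)$. Conjugation therefore doubles the rotation angle,
\begin{equation}
U_{AB}^\dagger C\,U_{AB}=C\exp\!\left(\tfrac{\pi}{2}P\right)=C\left(\cos\tfrac{\pi}{2}+P\sin\tfrac{\pi}{2}\right)=CP=i^{1-a}CAB,
\end{equation}
which is the claim. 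Equivalently, avoiding exponentials, one expands $\tfrac{1}{2}(1-P)C(1+P)=\tfrac{1}{2}(C+CP-PC-PCP)$ and uses $PC=-CP$ together with $P^2=-1$ (so that $PCP=C$) to collapse the four terms to $CP$.

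As the lemma statement anticipates, there is no real obstacle: every step is a one-line manipulation. The only point needing care is the joint bookkeeping of the phase $i^{1-a}$ and the sign $(AB)^2=(-1)^a$, which is exactly what conspires to make $P$ square to $-1$ for both parities $a\in\{0,1\}$. Isolating $P$ at the start treats the commuting and anticommuting cases uniformly and keeps the phase tracking transparent.
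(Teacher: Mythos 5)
Your proof is correct and complete: the key identities $(AB)^2=(-1)^a$, hence $P^2=-1$ and $P^\dagger=-P$ for $P=i^{1-a}AB$, together with $PC=-CP$, are exactly what is needed, and both your exponential-angle-doubling argument and the direct expansion $\tfrac{1}{2}(1-P)C(1+P)=CP$ are valid. The paper offers no proof of this lemma (it is stated as "easily verified"), and your calculation is precisely the standard verification it has in mind.
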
 

We commute $R_{t-1}$ past $\mathcal{P}_t$, then $R_{t-2}$ past $\mathcal{P}_{t-1}$ and $\mathcal{P}_t$ etc; in this way the $R_j$ need not be commuted past other logical braids. 
While we always have $[N_{t-1},M_t]=0$, we can either have $[\mathcal{O}_{t-1},\mathcal{O}_{t}]=0$ or $\{\mathcal{O}_{t-1},\mathcal{O}_{t}\}=0$ 
(either case can arise for some $B_{t-1}$).
In the former case, commuting $R_{t-1}$ past $\mathcal{P}_t$ leaves $\mathcal{P}_t$ unchanged.
In the latter case, by Lemma~\ref{lem:braidABC},  it updates 
\begin{equation}\label{eqn:Mmt_Op_Gadget}
\mathcal{P}_t=\mathcal{O}_{t} M_{t} \mapsto \mathcal{P}_t=i\mathcal{O}_{t} \mathcal{O}_{t-1} N_{t-1} M_{t}.
\end{equation}
In either case, $\mathcal{P}_{t-1}$ and the resulting $\mathcal{P}_t$ commute (in the latter case due to $\{N_{t-1},M_{t-1}\}=\{\mathcal{O}_{t-1},\mathcal{O}_{t}\}=0$). 
Similarly, after commuting through all $R_j$ between $\mathcal{P}_i$ and $\mathcal{P}_l$ ($i<l$), the updated $\mathcal{P}_l$ commutes with $\mathcal{P}_i$.
The remaining $R_{j<i}$ do not change this, because they either update $\mathcal{P}_i$ and $\mathcal{P}_l$ in the same way, or they update one with a factor with which the other commutes.
Thus the resulting $\mathcal{P}_i$ all mutually commute.
However, they might not all commute with the $\{s^{\text{(c)}}_a\}$ at the beginning.

We now use our remaining freedom in choosing $\mathcal{O}_i$ to judiciously set the resulting $\mathcal{P}_i$.
Since each has a different, non-trivial product of $\mathcal{O}_i$ (already set to be independent of $\mathcal{S}$), the $\mathcal{P}_i$ are independent of $\mathcal{S}$, and uniquely set by choosing the $\mathcal{O}_i$. 
Thus, so long as $\mathcal{A}(P)\neq \emptyset$ for $P$ any non-trivial product of the $\mathcal{P}_i$, we may choose the $\mathcal{A}(\mathcal{P}_i)$ arbitrarily.
Let us, therefore, take $\mathcal{A}(\mathcal{P}_i) = \{s_{a_i}^\text{(c)}\}$ for some indices $a_i\in \{1,\ldots, n\}$, with $a_i\neq a_j$ for $i\neq j$.
This choice clearly ensures the $\mathcal{P}_i$ are independent of $\mathcal{S}$.

Beginning with $\mathcal{P}_1$, this measurement anti-commutes only with $s_{a_1}^\text{(c)}$. 
Hence it is replaced with the logical braid $V_1 = \exp(\frac{\pi}{4} \mathcal{P}_1 s_{a_1}^\text{(c)})$.
Since both $\mathcal{P}_1$ and $s_{a_1}^\text{(c)}$ commute with all $\mathcal{P}_{i>1}$, $V_1$ can be commuted past these measurements without updating them. 
Then we must replace $\mathcal{P}_2$ with a logical braid, $V_2 = \exp(\frac{\pi}{4}\mathcal{P}_2 s_{a_2}^\text{(c)})$, which can be similarly commuted past all $\mathcal{P}_{i>2}$, etc.
After replacing all measurements of the $\mathcal{P}_i$ with their corresponding $V_i$, we are left with a circuit involving measurements of the $s_a^\text{(c)}$, followed by the logical braid $(V_t V_{t-1}\ldots V_1)(R_1 R_2 \ldots R_t)$, which is followed by the measurements of the $\Lambda_a$.

Owing to the arbitrariness of the logical braid $B_t$ that updated only the final $s_a^\text{(c)}$ measurements but not the $\mathcal{O}_i$, we may choose the $\Lambda_a$ to be any (non-trivial, mutually commuting) fermion parities. Let us focus on the first of these, $\Lambda_1$. 
It is possible for this to have any commutation relations with the $s_a^\text{(c)}$ and $\mathcal{O}_i$ (so long as it does not commute with all of them), because there are $t\leq n$ of the $\mathcal{O}_i$ and $n$ of the $s_a^\text{(c)}$, all of which are independent.
Let $\Lambda_1 \mathcal{O}_i = (-1)^{q_i} \mathcal{O}_i\Lambda_1$ for $q_i\in \lbrace 0, 1\rbrace$. 
Then the above implies we may choose all $q_i$ and the set $\mathcal{A}(\Lambda_1)$ arbitrarily.

Commute all logical braids $R_t,\ldots, R_1$ past $\Lambda_1$. 
This updates $\Lambda_1$ to $\mathcal{Q}_1$, an operator commuting with all $\mathcal{P}_i$ for the same reason that the $\mathcal{P}_i$ mutually commute.
Now we show that $\mathcal{Q}_1$ may contain any string of $N_i$:
if $I$ is some set of indices from $\lbrace 1,\ldots, t\rbrace$, then $\mathcal{Q}_1$ can act in $\mathcal{R}_t$ as $\prod_{i\in I}N_i$  if we choose the $q_i$ such that $\{ \Lambda_1 (\prod_{j>i; \, j\in I} \mathcal{O}_j), \mathcal{O}_i\} = 0$ if and only if $i\in I$ (Lemma~\ref{lem:braidABC}).
Thus, given operators $\mathcal{O}_i$, there is a one-to-one correspondence between length-$t$ bit strings $\mathbf{q}=(q_1,\ldots, q_t)$ and products of the $N_i$ and hence, we can choose $\mathcal{Q}_1$ to include any product of $N_i$.

Now we consider commuting the $V_i$ past $\mathcal{Q}_1$. 
Since $\mathcal{Q}_1$ and the $\mathcal{P}_i$ commute, $V_i$ updates $\mathcal{Q}_1$ only if $\{ \mathcal{Q}_1,s_{a_i}^\text{(c)}\} = 0$. 
Thus, by Lemma~\ref{lem:braidABC}, the $V_i$ update $\mathcal{Q}_1$ to the operator
\begin{equation}
    \mathcal{Q}_1^\prime = V_t^\dagger \ldots V_1^\dagger \mathcal{Q}_1 V_1 \ldots V_t = \mathcal{Q}_1 \left(\prod_{s_{a_j}^\text{(c)}\in \mathcal{A}(\mathcal{Q}_1)} \mathcal{P}_j s_{a_j}^\text{(c)} \right).
\end{equation}
Each $\mathcal{P}_j$ acts as $M_j$ on the $j$\textsuperscript{th} magic state, trivially on all $k>j$ magic states and either trivially or as $N_k$ on all $k<j$ magic states. 
We may thus obtain an arbitrary string of $M_j$ operators in $\mathcal{Q}_1^\prime$ if we can set $\mathcal{A}(\mathcal{Q}_1)$ arbitrarily. 
But we know that we may set $\mathcal{A}(\Lambda_1)$ arbitrarily, 
and that $\mathcal{Q}_1  = \pm \Lambda_1 \prod_{j\in I} (i\mathcal{O}_j N_j)$ for some set of indices $I$ (where the $\pm 1$ is due to $I$ being unordered). 
Hence if $s_a^\text{(c)} \in \mathcal{A}(\prod_{j\in I}\mathcal{O}_j)$ then $s_a^\text{(c)} \in \mathcal{A}(\mathcal{Q}_1)$ if and only if $s_a^\text{(c)}\notin \mathcal{A}(\Lambda_1)$
and vice versa.
Thus for any product of $\mathcal{O}_i$ in $\mathcal{Q}_1$, we can  choose $\mathcal{A}(\mathcal{Q}_1)$ arbitrarily by choosing a suitable $\mathcal{A}(\Lambda_1)$.

To summarise, by choosing a suitable $\mathbf{q}$, the operator $\mathcal{Q}_1^{\prime}$ can include any string of $N_i$, and by choosing a suitable $\mathcal{A}(\Lambda_1)$, it can include any string of $M_i$.
(Of course, the $\mathcal{P}_j$ factors in $\mathcal{Q}_1^\prime$ may also include some $N_i$, but these are independent of $\mathbf{q}$, and hence do not preclude using $\mathbf{q}$ to achieve any string of $N_i$ in $\mathcal{Q}_1^\prime$.)
Thus we can choose $\mathcal{Q}_1^\prime$ to contain any string from $\langle X_1, s_1, \ldots, X_t, s_t\rangle$. 

What remains to show is that the restriction of $\mathcal{Q}_1^\prime$ to $\mathcal{R}_t$ may need to be measured in FPBC (instead of being replaced by a logical braid). 
For this, $\mathcal{Q}_1^\prime$ must commute with all the $s_{a}^\text{(c)}$. 
To see that this is possible, note that having set $\mathcal{A}(\mathcal{P}_i) = \{s_{a_i}^\text{(c)}\}$ implies that $\mathcal{Q}_1^\prime$ commutes with all $s_a^\text{(c)}\notin \mathcal{A}(\mathcal{Q}_1)$.
Furthermore, $\mathcal{Q}_1^\prime$ commutes with $s_{a_j}^\text{(c)}\in\mathcal{A}(\mathcal{Q}_1)$ because for such $s_{a_j}^\text{(c)}$, we have $\{\mathcal{Q}_1,s_{a_j}^\text{(c)}\}=\{\mathcal{P}_j,s_{a_j}^\text{(c)}\}=0$ and $[\mathcal{P}_i,s_{a_j}^\text{(c)}]=0$ for $i\neq j$. 
This leaves only $s_a^\text{(c)} \in \mathcal{A}(\mathcal{Q}_1)\setminus S_t$, with $S_t=\{s_{a_j}^\text{(c)}|j=1,\ldots,t\}$, to be considered. 
However, thus far only $\mathcal{A}(\mathcal{Q}_1)\cap S_t$ needed to be chosen, hence we can set $\mathcal{A}(\mathcal{Q}_1)\setminus S_t=\emptyset$. 
Therefore, with the above choices, $[\mathcal{Q}_1^\prime,s_{a}^\text{(c)}]=0$ for all $a=1,\ldots,n$. 
We have thus proved the following:

\begin{theorem}
Let $\mathcal{R}_n$ be a register of $(2n+2)$ Majoranas in a fixed parity sector, and in the $s_a^\text{(c)} = 1 \, (a=1,\ldots, n)$ computational basis state. 
Let $T_{\theta,\, ab} = \exp(\theta\gamma_a\gamma_b) \notin \mathcal{B}_{2n+2}$ be some gate implementable via a magic state gadget that uses a single magic state.
Consider the class of fermionic circuits defined on $\mathcal{R}_n$ involving $t\leq n$ gates of the form $T_{\theta,\, ab}$ interspersed by logical braids $B_i\in \mathcal{B}_{2n+2}$, and $n$ final $s_a^\text{(c)}$ measurements. 
Upon converting such a circuit to FPBC, the set of all possible measurements thus generated may contain any fermion parity operator defined on a register $\mathcal{R}_t$ of $(2t+2)$ Majoranas.
\end{theorem}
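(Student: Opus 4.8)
The plan is to show that, for an arbitrary target parity $P$ on $\mathcal{R}_t$, there is a choice of circuit (logical braids $B_i$ together with gadget measurement outcomes) whose FPBC reduction produces a measurement whose restriction to $\mathcal{R}_t$ equals $P$. Since $\overline{\text{Maj}}(2t+2)$ is generated by the $s_l$, $X_l$ and $\Gamma_{2t+2}$, and since the gadget supplies $M_j,N_j\in\langle X_j,s_j\rangle$ with $\{M_j,N_j\}=0$ (so that $\langle M_j,N_j\rangle=\langle X_j,s_j\rangle$), it suffices to realize, on each magic-state mode $j$, an arbitrary element of $\langle M_j,N_j\rangle$; that is, to switch each $M_j$ and each $N_j$ on or off in the operator that FPBC ends up measuring.

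The reduction established above leaves the measurement $\mathcal{Q}_1^\prime$ obtained from the first final measurement $\Lambda_1$ after (i) commuting the $R_j$ past $\Lambda_1$ and (ii) commuting the $V_i$ (the braids replacing the $\mathcal{P}_i$) past the result. I would invoke two independent ``dials'': the bit string $\mathbf{q}=(q_1,\ldots,q_t)$ fixing $\Lambda_1\mathcal{O}_i=(-1)^{q_i}\mathcal{O}_i\Lambda_1$ selects, via Lemma~\ref{lem:braidABC}, exactly which $N_i$ appear in $\mathcal{Q}_1$; while the set $\mathcal{A}(\Lambda_1)$ selects which $V_i$ act nontrivially, hence which $\mathcal{P}_j$ (acting as $M_j$ on mode $j$) appear in $\mathcal{Q}_1^\prime$. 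The crucial input is that both dials are genuinely free, and this is where the hypothesis $t\le n$ enters: it guarantees that the $\mathcal{O}_i$ can be chosen independent of $\mathcal{S}=\langle\Gamma_{2n+2},s_1^\text{(c)},\ldots,s_n^\text{(c)}\rangle$, and that the commutation data of $\Lambda_1$ (the $\{q_i\}$ and $\mathcal{A}(\Lambda_1)$) can be specified arbitrarily, because the $n$ operators $s_a^\text{(c)}$ together with the $t$ operators $\mathcal{O}_i$ are jointly independent.

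Combining these, I would argue that $\mathbf{q}$ realizes any product $\prod_{i\in I}N_i$ in $\mathcal{Q}_1^\prime$ while, independently, $\mathcal{A}(\Lambda_1)$ realizes any product of $M_j$; since $\{M_j,N_j\}$ generate the full single-mode algebra $\langle X_j,s_j\rangle$, the operator $\mathcal{Q}_1^\prime$ can be made to contain any string in $\langle X_1,s_1,\ldots,X_t,s_t\rangle$, hence any target parity on $\mathcal{R}_t$. The final step is to confirm that $\mathcal{Q}_1^\prime$ is genuinely \emph{measured} in FPBC rather than deleted or replaced by a logical braid: I would verify $[\mathcal{Q}_1^\prime,s_a^\text{(c)}]=0$ for all $a$, using that $\mathcal{A}(\mathcal{P}_i)=\{s_{a_i}^\text{(c)}\}$ cancels the anticommutation of $\mathcal{Q}_1$ with each $s_{a_i}^\text{(c)}\in\mathcal{A}(\mathcal{Q}_1)$, and setting $\mathcal{A}(\mathcal{Q}_1)\setminus S_t=\emptyset$, which is permissible since only $\mathcal{A}(\mathcal{Q}_1)\cap S_t$ was constrained.

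The main obstacle I anticipate is disentangling the two controls while simultaneously enforcing commutation with every $s_a^\text{(c)}$: the set $\mathcal{A}(\mathcal{Q}_1)$ governing the $M_j$ content is not $\mathcal{A}(\Lambda_1)$ itself but is shifted by $\mathcal{A}(\prod_{j\in I}\mathcal{O}_j)$, where the index set $I$ is tied to $\mathbf{q}$. The careful bookkeeping is to verify that, for any $\mathbf{q}$ (fixing the $N_i$ content), one can still choose $\mathcal{A}(\Lambda_1)$ so as to produce any desired $\mathcal{A}(\mathcal{Q}_1)\cap S_t$, and that this choice leaves $\mathcal{Q}_1^\prime$ commuting with all $s_a^\text{(c)}$ — so that the two generating mechanisms really are orthogonal and the resulting operator is a legitimate $\mathcal{R}_t$ measurement.
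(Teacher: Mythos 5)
Your proposal is correct and follows essentially the same route as the paper's own proof: the same two independent controls (the bit string $\mathbf{q}$ fixing the $N_i$ content of $\mathcal{Q}_1$ via Lemma~\ref{lem:braidABC}, and $\mathcal{A}(\Lambda_1)$ fixing the $M_j$ content of $\mathcal{Q}_1^\prime$ through the $V_i$), the same use of $t\leq n$ for independence from $\mathcal{S}$, and the same resolution of the bookkeeping issue — noting that $\mathcal{A}(\mathcal{Q}_1)$ differs from $\mathcal{A}(\Lambda_1)$ only by the fixed shift $\mathcal{A}(\prod_{j\in I}\mathcal{O}_j)$, so any desired $\mathcal{A}(\mathcal{Q}_1)\cap S_t$ remains achievable, with $\mathcal{A}(\mathcal{Q}_1)\setminus S_t=\emptyset$ ensuring $[\mathcal{Q}_1^\prime,s_a^\text{(c)}]=0$ for all $a$. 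This matches the paper's argument step for step.
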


The above theorem implies that if a MZM design contains configuration constraints on measurable operators, it is possible for FPBC to require the measurement of unmeasurable operators (cf. Section~\ref{app:resourcecostconstrained}).
While in the most general case, without the stated assumptions or judicious choices, FPBC could avoid certain measurements, it remains difficult for it to avoid unmeasurable (or uneasily measurable) operators in many designs.
For example, for $t>n$, while some strings from $\langle X_1, s_1, \ldots, X_t, s_t\rangle$ may not require measurement, those that do may still involve any of the $s_i$ and $X_i$. 
For sparse encodings, we must consider a subset of the above-defined $s_j$ and $X_j$ as encoding logical information; 
then any string of operators from the group generated by this subset can appear in FPBC. 
Similarly, for more general gadgets using $k\geq 1$ magic states to enact a single gate, while FPBC can generate measurements drawn only from a subset of all MZM strings, these strings have similar complexity as those for $k=1$.

\subsection{Resource cost of FPBC in constrained hardware}
\label{app:resourcecostconstrained}

Section~\ref{app:gadgets} showed that it is possible to generate an arbitrary parity-preserving MZM string (up to a factor of $\Gamma_{2t+2}$) in FPBC.
Hence let us model FPBC as a sequence of $p\leq t$ (commuting) fermion parity measurements chosen uniformly at random. 
(We again assume a dense encoding. Some designs are not suitable for this~\cite{Majorana_Ram_Flux_control2013,Karzig_Majorana2017}; we briefly comment on them at the end of this section.)
In terms of this model, we now show that, upon increasing the number of MZMs, FPBC becomes overwhelmingly likely to generate many measurements that, in a broad class of physical setups, require performing prohibitively many extra operations, such as braids.

Generalising from transmon-based designs, we consider setups that have readout islands $v_i$ hosting a constant, even number $c$ of MZMs. 
We assume the islands are arranged in a cubic lattice $\mathcal{L}$ of dimension $d$.
For a fermion parity $\Gamma$, we define the weight of $\Gamma$ in $v_i$ to be the number of MZMs in $v_i$ that feature in $\Gamma$.
Motivated by features common to many proposed designs~\cite{Maj_Based_FQC2018,Coulomb_assisted_braiding2012,Milestones_Paper2016}, we assume that (i) for some $\Gamma$ to be measurable, it must have even weight in all islands of $\mathcal{L}$,
and (ii) there is a constant braid radius $r$, such that every MZM within distance $r$ of some $\gamma_a$ can be braided directly with $\gamma_a$ (either adiabatically or through measurements with ancillas).

As noted in the main text, if an operator $\Gamma$ is non-measurable, braids (or extra measurements and ancillas) must supplement FPBC.
We define the braid cost of parity $M_i$ to be the minimum number of braids that must be performed to place $M_i$ in a measurable configuration. 
Every operator that has odd weight in two readout islands has a braid cost given by $\sim x/r$, where $x$ is the distance (i.e. the number of lattice links) between the two islands. 
So, in general, FPBC might involve the measurement of $O(m)$-braid-cost operators, where $m=O(t^{1/d})$ is the linear size of the system.

Below, we upper bound the proportion of parity operators that have braid cost no greater than a given value $R$. 
An operator with braid cost less than $R$ will be called $R$-measurable.
We impose periodic boundary conditions on $\mathcal{L}$ (this only increases the proportion of measurable operators).
As we focus only on asymptotics, assume for simplicity that $\mathcal{L}$ has equal side lengths $m$, and that $m$ is divisible by $rR$.
Divide the lattice into $d$-dimensional cubic regions of side length $rR$. 
There are $(rR)^d$ inequivalent such divisions. 
We now prove the following lemma:
\begin{lemma}\label{lem:R-meas_ops}
Given an $R$-measurable operator $\Gamma$, lattice $\mathcal{L}$ (with periodic boundary conditions) can be divided into $d$-cubic regions of side length $rR$ in such a way that $\Gamma$ has even weight in every region.
\end{lemma}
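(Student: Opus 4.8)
The plan is to translate the braid-cost hypothesis into a combinatorial statement about a pairing of the ``odd-weight'' islands, and then locate a good division by an averaging (union-bound) argument over the $(rR)^d$ inequivalent divisions mentioned in the text.

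First I would reduce the claim to odd-weight islands alone. Call an island \emph{odd} if $\Gamma$ has odd weight on it. A region has even weight in $\Gamma$ if and only if it contains an even number of odd islands, since even-weight islands contribute evenly regardless. Because $\Gamma$ is parity-preserving, its total MZM count is even, so the number of odd islands is even and they can be paired. Crucially, by assumption (ii) and the definition of braid cost, $R$-measurability means there is a pairing $\{(a_i,b_i)\}$ of the odd islands with optimal routing $\sum_i \|a_i-b_i\|_1 / r < R$, i.e.\ $\sum_i \|a_i-b_i\|_1 < rR$, where $\|\cdot\|_1$ denotes lattice (graph) distance.

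Next I would set up the offset step. The $(rR)^d$ inequivalent divisions correspond to offsets $\mathbf{t}\in\{0,\dots,rR-1\}^d$ of the standard grid of side-$rR$ cubes (well defined since $m$ is divisible by $rR$ and we have periodic boundaries). Call a pair \emph{split} by $\mathbf{t}$ if its two islands fall in different cubes; if no pair is split, each cube contains only whole pairs, hence an even number of odd islands, and we are done. In coordinate direction $j$ the pair $(a_i,b_i)$ is separated precisely when a grid hyperplane lies between $a_{ij}$ and $b_{ij}$; since $|a_{ij}-b_{ij}|\le\|a_i-b_i\|_1<rR$, exactly $|a_{ij}-b_{ij}|$ of the $rR$ residues of $t_j$ do this. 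A union bound over the $d$ directions then shows that at most $(rR)^{d-1}\sum_j|a_{ij}-b_{ij}| = (rR)^{d-1}\|a_i-b_i\|_1$ offsets split pair $i$.

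Finally I would sum over pairs: the number of offsets splitting \emph{some} pair is at most $(rR)^{d-1}\sum_i\|a_i-b_i\|_1 < (rR)^{d-1}\cdot rR = (rR)^d$, strictly less than the total number $(rR)^d$ of offsets. Hence some offset splits no pair, yielding the required division. I expect the main obstacle to be not any single estimate but keeping the constants aligned so that the strict inequality $\sum_i\|a_i-b_i\|_1 < rR$ survives, given the ``$\sim x/r$'' in the braid-cost definition; I would handle this by fixing the braid cost of a pairing as $\sum_i\|a_i-b_i\|_1/r$ for the optimal pairing, so that $R$-measurability supplies exactly the slack the union bound needs to remain strict.
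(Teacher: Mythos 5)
Your proposal is correct and takes essentially the same route as the paper's proof: both reduce $R$-measurability to a pairing of the odd-weight islands with total distance $<rR$, and then count pair-splitting divisions, concluding that since the splitting divisions number fewer than $rR$ per direction, a non-splitting division must exist. The only difference is packaging — the paper proves the $d=1$ case and applies it direction-by-direction, while you run a single union bound over all $(rR)^d$ offsets — but the underlying counting is identical.
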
 
\begin{proof}
Suppose $\Gamma$ has odd weight in $2q$ islands with distances $b_1,\ldots,b_q$ between nearest-neighbor islands.
Since $\Gamma$ is $R$-measurable, we have $B\coloneqq\sum_i b_i < rR$.
Start in $d=1$. 
There are precisely $b_j$ divisions of $\mathcal{L}$ for which a boundary between regions lies between nearest-neighbor pair $j$ (with separation $b_j$). 
We say that pair $j$ is ``split" by those divisions.
We now show that a division exists that splits none of the pairs. 
A spatial distribution of the $2q$ islands that maximises the number of pair-splitting divisions is one where, for any division, at most one of the pairs is split.
In this case there are $B$ divisions that split some pair. 
But since $B<rR$, even in this case there must be at least one division that splits no pairs.
For $d>1$, the sum of separations between nearest-neighbor pairs in any direction is at most $B< rR$. 
Hence the argument for $d=1$ can be applied to all directions.
\end{proof}

Next we provide an upper bound on the number $M_{m,d}^R$ of $R$-measurable fermion parities (up to a sign) in $\mathcal{L}$. 
While for any $R$-measurable parity $\Gamma$ there exists at least one division such that $\Gamma$ has even weight in every region (Lemma~\ref{lem:R-meas_ops}), not all operators that satisfy this are $R$-measurable. 
[E.g., an operator with a pair (in terms of Lemma~\ref{lem:R-meas_ops}) in every region is not $R$-measurable beyond a certain $m$ if $R< O(m)$, since we have $O(m)$ regions.]
Hence, by counting all parities that have even weight in every region for at least one division, we over-estimate $M_{m,d}^R$. 
For a given division, there are $(m/rR)^d$ regions with $c(rR)^d$ MZMs in each (recall $c$ is the number of MZMs per island).
The number of parities up to a sign in a single region is $2^{c(rR)^d-1}$ (where the $-1$ accounts for half of all operators on the region being parity odd).
Hence there are $2^{m^d c - (m/rR)^d}$ operators that have even weight in every region.
Multiplying this value by the number of distinct divisions $(rR)^d$ further over-estimates $M^R_{m,d}$, since operators that are even in all regions for multiple divisions are counted more than once. 
There are $N_{m,d} = 2^{m^d c - 1}$ parity operators in total (up to a sign). 
We conclude:
\begin{align}\label{eqn:R_meas_ops_dim_d}
    \frac{M_{m,d}^R}{N_{m,d}} < (rR)^d\, 2^{1- (m/rR)^d}.
\end{align}
Hence, we have proved the following:
\begin{theorem}\label{thm:Braid_Cost}
Let $\mathcal{L}$ be a $d$-dimensional cubic lattice with $m^d$ islands and a constant braid radius $r$. For any $R<O(m)$, the proportion of $R$-measurable operators $M_{m,d}^R/N_{m,d}\rightarrow 0$ as $m\rightarrow\infty$.
\end{theorem}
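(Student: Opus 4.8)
The plan is to read the theorem off the counting bound \eqref{eqn:R_meas_ops_dim_d}, which already encapsulates all of the combinatorial content (Lemma~\ref{lem:R-meas_ops} together with the over-counting of parities that are even in every region for at least one division). Given that bound, the theorem is a pure asymptotic statement, so the remaining task is to analyze the right-hand side of \eqref{eqn:R_meas_ops_dim_d} as $m\to\infty$ with $R$ sublinear in $m$, and check that it tends to $0$; since the left-hand side is a proportion bounded by $1$, reaching $0$ is the whole content.

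First I would introduce the number of regions per lattice direction, $u\coloneqq m/(rR)$, so that the bound reads $M_{m,d}^R/N_{m,d} < (rR)^d\,2^{1-u^d}$. Because $r$ is a fixed constant and $R$ grows slower than linearly in $m$, we have $rR\le m$ (so $u\ge 1$ and the division into regions is well defined, matching the divisibility and periodic-boundary assumptions already in place) and $u\to\infty$ as $m\to\infty$. The right-hand side is then a product of a polynomially growing prefactor $(rR)^d\le m^d$ and a super-exponentially decaying factor $2^{-u^d}$, and the claim is that the decay wins.

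The crux is to make that competition precise. Taking a base-$2$ logarithm, it suffices to show $u^d - d\log_2(rR) - 1\to +\infty$; bounding the prefactor term via $rR\le m$ reduces this to showing that $u^d=(m/rR)^d$ eventually exceeds $d\log_2 m$ with a diverging margin. I expect this to be where the only genuine work lies: the inequality holds precisely when $m/(rR)$ outgrows $(\log m)^{1/d}$, i.e. for $R=o\!\big(m/(\log m)^{1/d}\big)$, which is the operative meaning of ``$R<O(m)$'' here and in particular covers every constant or poly-logarithmic $R$ (the regime relevant to the main-text claim that a constant number of braids suffices only vanishingly often). Under this condition $u^d$ dominates $d\log_2 m$, the logarithm of the bound tends to $-\infty$, and hence $M_{m,d}^R/N_{m,d}\to 0$.

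The main obstacle is therefore not combinatorial but the careful tracking of the polynomial prefactor $(rR)^d$ against the doubly-exponential suppression $2^{-(m/rR)^d}$: once $u$ is shown to diverge fast enough the conclusion is immediate, but one must confirm the admissible growth rate of $R$ rather than treat ``sublinear'' as automatically sufficient. I would note in passing that for small $m$ the bound may exceed $1$ and be vacuous; this does not affect the limit, since the limit statement only requires the bound to become small for all sufficiently large $m$.
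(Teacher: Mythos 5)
Your proposal is correct and takes essentially the same route as the paper: there, the proof of the theorem just \emph{is} the derivation of the bound~\eqref{eqn:R_meas_ops_dim_d} (via Lemma~\ref{lem:R-meas_ops} and the over-counting over divisions), with the final limit asserted rather than analyzed, and you supply exactly that missing asymptotic step. Your extra care does expose a real imprecision worth recording: the right-hand side $(rR)^d\,2^{1-(m/rR)^d}$ does \emph{not} tend to zero for every sublinear $R$ --- for instance it diverges for $R = m/\bigl(r(\log_2 m)^{1/(2d)}\bigr)$, which is $o(m)$ --- so the bound establishes the theorem only when $(m/rR)^d$ outgrows $d\log_2 m$, i.e.\ for $R=o\bigl(m/(\log m)^{1/d}\bigr)$, which is how the paper's non-standard hypothesis ``$R<O(m)$'' must be read for its own argument to be complete (and which still covers the constant-in-$t$ braid-cost regime invoked in the main text). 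With that reading, your proof is sound, and your explicit bookkeeping of the polynomial prefactor against the $2^{-(m/rR)^d}$ suppression is the one step that neither the bound nor the paper's one-line conclusion makes precise.
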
 
A random set of fermion parity measurements will therefore be dominated by non-$R$-measurable operators for any $R<O(m)$, as $m$ increases.
Furthermore, this $O(m)$ braid cost cannot be avoided by introducing ancilla MZMs into the system.
For example, if we assume that having an even weight in all islands is the only requirement for measurability, then an operator $i\gamma_a\gamma_b$ that is non-$R$-measurable can be made $R$-measurable only by using an ancilla operator $A$ that has odd weight within regions of radius $rR/2$ centred on the islands containing $\gamma_a$ and $\gamma_b$ (call these islands $v_a$ and $v_b$ respectively).
One could then implement $<R$ braids to move $\gamma_a$ and $\gamma_b$ onto the closest islands in which $A$ has odd weight (call these $v_{\tilde{a}}$ and $v_{\tilde{b}}$ respectively).
Then the product $i\gamma_a\gamma_b A$ is measurable and, since the eigenvalue of $A$ is known, that of $i\gamma_a\gamma_b$ can be inferred.
But if $v_a$ and $v_b$ are separated by a distance of more than $2Rr$ [a scenario that according to Theorem~\ref{thm:Braid_Cost} is overwhelmingly likely as $m\rightarrow \infty$, for any $R<O(m)$],
the operator $A$ must have odd weight in islands that are more than a distance of $Rr$ separated. 
Hence, $A$ itself is not $R$-measurable and so the system cannot be prepared in a state of definite $A$-parity with fewer than $R$ braids.
Indeed, the number of required braids is no different than if $A$ had not been introduced: if $v_{\tilde{a}}$ and $v_{\tilde{b}}$ are separated by distance $rD$, there are $D$ braids required to prepare the system in a fixed $A$-parity state, and $\sim R$ braids to move $\gamma_a$ to $v_{\tilde{a}}$ and $\gamma_b$ to $v_{\tilde{b}}$. 
But $r(R+D)$ is at least as big as the shortest distance between $v_a$ and $v_b$, and hence introducing $A$ cannot reduce the number of braids required to measure $i\gamma_a\gamma_b$.

Finally, consider the case in which MZMs are grouped into smaller blocks of fixed parity (e.g., in ``Majorana RAM"~\cite{Majorana_Ram_Flux_control2013} and tetron/hexon~\cite{Karzig_Majorana2017,Maj_Box_Qubits2017,Quantum_Dot_Readout2020,Smith_Bartlett_Readout2020} architectures).
These setups cannot implement fermionic quantum circuits~\cite{Bravyi_Kitaev_Fermionic_QC2002,Maj_Based_FQC2018}, but may still perform FPBC with magic states sparsely encoded. 
There may still be, however, a large number of non-measurable operators generated in FPBC if, for example, the system constrains measurable operators to have support in blocks that are geometrically grouped close together.
(This is the case for tetron and hexon architectures~\cite{Karzig_Majorana2017,Optimizing_Clifford_Gates_2020}, but not for the Majorana RAM~\cite{Majorana_Ram_Flux_control2013}.)
As we noted in Section~\ref{app:gadgets}, this FPBC could contain arbitrary measurements, subject to the constraint that all measurement operators commute with all block parities.
We expect (but do not prove here), similarly to the number of $R$-measurable [$R< O(m)$] operators in systems using dense encodings, that the proportion of operators that are measurable, or can be brought into a measurable configuration with only few extra operations, becomes vanishingly small as the size of the system increases.
While Majorana RAMs do not have this constraint, in that case braids, moving computational and ancilla MZMs onto/off the bus-connected islands, are required to switch between different Pauli operator strings. 
In the worst case, one must perform $O(t)$ braids per measurement (if two subsequent Pauli strings differ greatly from one another) and $O(t)$ measurements - hence one might require $O(t^2)$ braids to perform FPBC.

\subsection{Top-transmon Hamiltonian, tri-junction projections, and resonance shifts}

Here we discuss the Hamiltonian of the proposed physical design (cf. Fig.~\ref{fig:Model} of the main text). 
Let islands belonging to set $S_\mathcal{B}$ be bus-connected and the rest be ground-connected. 
Furthermore, assume $E_J\gg E_C$ and that $E_{J,ak}^{\text{(on)}}/E_{C,k}\gg E_J/E_C$ for all superconductors $a,k$ connected by JJs.
Thus all bus-connected (ground-connected) islands are taken to have their superconducting phases pinned to that of the bus (phase ground).
We neglect the effects of quantum phase slips of these island phases around their minima, taking into account only the quantum phase slips of the bus-ground phase difference.
In this case, and for decoupled tri-junctions [i.e., in the absence of $V_{M,j}$ of the main text, repeated in Eq.\eqref{eq:apptrijun} for convenience], the system's energy levels are approximately given by~\cite{Transmon_Qubit2007,Majorana_Ram_Flux_control2013}
\begin{multline}\label{eqn:transmon_energies}
    E_m \approx \bar{E}_m 
    -(-1)^m \delta\varepsilon_m \\ 
    \left(\prod_{k\in S_\mathcal{B}} \mathcal{P}_k \right) \cos\left[\frac{\pi}{e}\left(q_0+\sum_{k\in S_\mathcal{B}} q_k\right)\right],
\end{multline}
where $\mathcal{P}_k$ is the fermion parity of the Majorana bound states on island $k$, 
$q_0$ and $q_k$ are related to the induced charges on the bus and island $k$, respectively~\cite{Top_Transmon2011},
and the form of $\bar{E}_m$ and $\delta\varepsilon_m$ can be found in Ref.~\onlinecite{Transmon_Qubit2007}. 
The relevant energy scales are $\hbar\Omega_m =\bar{E}_{m+1}-\bar{E}_m \approx \sqrt{8E_JE_C}$ and \mbox{$\delta\varepsilon_m \propto \exp(-\sqrt{8E_J/E_C})\ll \hbar\Omega_m$.} 

We next include $V_{M,j}$. 
In the $E_M\ll \hbar\Omega_m$ regime, the $V_{M,j}$ term acts as a perturbation to the transmon levels. 
If the temperature $T$ also satisfies $k_BT\ll \hbar\Omega_0$, one can focus on the two lowest transmon levels ($m=0,1$), described by the perturbed top-transmon Hamiltonian
\begin{multline}
    H_\text{tt} = \sigma_z \left[\tfrac{1}{2}\hbar\Omega_0 + \left(\prod_{k\in S_\mathcal{B}}\mathcal{P}_k\right) \delta_+ \cos(\pi q_{\text{tot}}/e)\right]\\
    +\left(\prod_{k\in S_\mathcal{B}}\mathcal{P}_k\right)\delta_- \cos(\pi q_{\text{tot}}/e) + \sum_j V_{M,j},
\end{multline}
where $\delta_{\pm} = \tfrac{1}{2}(\delta\varepsilon_1\pm \delta\varepsilon_0)$ and $q_{\text{tot}} = q_0 + \sum_{k\in S_\mathcal{B}} q_k$. 
The Pauli matrix $\sigma_z$ acts on transmon eigenstates labelled by $m=0,1$.
From now on, we assume $q_\text{tot} = 0$, for simplicity.

We now focus on the tri-junction terms
\begin{equation}\label{eq:apptrijun}
V_{M,j}=
\frac{1}{2}E_{M}\sum_{abc=1}^{3}\epsilon_{abc}A_{j,a}(i\gamma_{j,b}\gamma_{j,c}).
\end{equation}
Introducing the vectors
\begin{equation}
\mathbf{A}_{j}=\left(\begin{array}{c}
A_{j,1}\\
A_{j,2}\\
A_{j,2}
\end{array}\right),\qquad\boldsymbol{\gamma}_{j}=\left(\begin{array}{c}
\gamma_{j,1}\\
\gamma_{j,2}\\
\gamma_{j,3}
\end{array}\right),
\end{equation}
we can write this as
\begin{equation}
V_{M,j}=-i\frac{1}{2}E_{M}\boldsymbol{\gamma}_{j}\cdot(\mathbf{A}_{j}\times\boldsymbol{\gamma}_{j}).
\end{equation}
Consider the decomposition using three mutually orthonormal unit vectors $\hat{\mathbf{A}}_{j}=\mathbf{A}_{j}/|\mathbf{A}_{j}| $ and $\hat{\mathbf{A}}_{j,\pm}$:
\begin{equation}
\boldsymbol{\gamma}_{j}=\hat{\mathbf{A}}_{j}\gamma_{j,0}+\hat{\mathbf{A}}_{j,+}\gamma_{j,+}+\hat{\mathbf{A}}_{j,-}\gamma_{j,-}.
\end{equation}
The orthonormality of the vectors implies that $\gamma_{j,0}$ and $\gamma_{j,\pm}$ are Majorana operators.
We find that $\gamma_{j,0}$ is a MZM; it cancels from the Hamiltonian because
\begin{equation}
\mathbf{A}_{j}\times\hat{\mathbf{A}}_{j}=0,\qquad\hat{\mathbf{A}}_{j}\cdot(\mathbf{A}_{j}\times\hat{\mathbf{A}}_{j,\pm})=0.
\end{equation}
We have
\begin{align}
&V_{M,j}=
iE_{M}\gamma_{j,+}\gamma_{j,-}\,\mathbf{A}_{j}\cdot(\hat{\mathbf{A}}_{j,+}\times\hat{\mathbf{A}}_{j,-}).
\end{align}
We use fermion conventions where $i\gamma_{j,+}\gamma_{j,-}=-1$ in the groundstate; this corresponds to the orientation 
\begin{equation}
\mathbf{A}_{j}\cdot(\hat{\mathbf{A}}_{j,+}\times\hat{\mathbf{A}}_{j,-})>0\quad\Rightarrow\quad\hat{\mathbf{A}}_{j,+}\times\hat{\mathbf{A}}_{j,-}=\hat{\mathbf{A}}_{j}.
\end{equation}

Working in the regime $k_BT,\delta\varepsilon_{0,1} \ll E_M$, we can project to the $i\gamma_{j,+}\gamma_{j,-}=-1$ sector.
Denoting this low-energy projection by $\mapsto$, we find
\begin{equation}
\boldsymbol{\gamma}_{j}\mapsto\hat{\mathbf{A}}_{j}\gamma_{j,0},
\end{equation}
\begin{equation}
i\gamma_{a}\gamma_{b} \mapsto -\sum_c \epsilon_{abc}\, \mathbf{e}_c\cdot \hat{\mathbf{A}}_j,
\end{equation}
where $\mathbf{e}_1 = (1,0,0)^\top$, $\mathbf{e}_2 = (0,1,0)^\top$ and $\mathbf{e}_3 = (0,0,1)^\top$.

Using these projections, we work to first-order in $\delta_\pm /E_M$ and obtain a low-energy effective form of $H_{tt}$:
\begin{equation}
    H_\text{eff} = \sigma_z \left[ \tfrac{1}{2}\hbar\Omega_0 + Q\delta_+ \right] + Q\delta_-
\end{equation}
where $Q = P_- \prod_{k\in S_\mathcal{B}} \mathcal{P}_k P_-^\dagger$ and $P_- = \prod_j \tfrac{1}{2}(1-i\gamma_{j,+} \gamma_{j,-}).$

The total system, including the resonator, the top-transmon, and the coupling between the two, is described by the Jaynes-Cummings Hamiltonian
\begin{equation}\label{eqn:H_JC}
    H_{JC} = H_\text{eff} + \hbar \omega_0 (a^\dagger a + \tfrac{1}{2}) + \hbar g \left(a\sigma_+ + a^\dagger \sigma_-\right),
\end{equation}
where $a$ and $a^\dagger$ are photon ladder operators, $\omega_0$ is the resonator's bare resonance frequency, $g$ is the transmon-resonator coupling strength, and $\sigma_\pm = (\sigma_x\pm i \sigma_y)/2$.
The resonator operates in the dispersive regime~\cite{Transmon_Qubit2007,Dispersive_Regime_2007}, $\delta\omega^2  \gg g^2 (n+1)$, where $\delta\omega = \Omega_0 - \omega_0$ and $n$ is the resonator's photon occupation number.
In this limit, one can diagonalize Hamiltonian~\ref{eqn:H_JC} within a block spanned by states $|n,m=1\rangle$ and $|n+1,m=0\rangle$, where $m$ is the transmon level.
We thus obtain effective resonance frequencies to second order in $g/\delta\omega$ (neglecting virtual transitions to $m>1$ levels~\cite{Transmon_Qubit2007}), assuming the transmon is in the lowest energy level:
\begin{equation}
    \omega_\text{eff} = \omega_0 - \frac{\hbar g^2}{\hbar\delta\omega + 2\delta_+ Q}
\end{equation}
A change in the eigenvalue of $Q$ results in a shift in this resonance frequency approximately given by Eq.(3) of the main text, where the constant $C = 4g^2/(\hbar \delta\omega^2)$.

\end{document}